\newtheorem{theorem}{Theorem} %[section]
\newtheorem{proposition}[theorem]{Proposition}
\newtheorem{lemma}[theorem]{Lemma}
\newtheorem{remark}[theorem]{Remark}
\newtheorem{corollary}[theorem]{Corollary}
\newtheorem{definition}[theorem]{Definition}
\newcommand{\ba}{\begin{align}}
\newcommand{\ea}{\end{align}}  %WTH is wrong with this??????
\newcommand{\be}{\begin{equation}}
\newcommand{\ee}{\end{equation}}
\newcommand{\bea}{\begin{eqnarray}}
\newcommand{\eea}{\end{eqnarray}}
\newcommand{\barr}{\begin{array}}
\newcommand{\earr}{\end{array}}
\newcommand{\bn}{\begin{enumerate}}
\newcommand{\en}{\end{enumerate}}
\newcommand{\bi}{\begin{itemize}}
\newcommand{\ei}{\end{itemize}}
\newcommand{\bbbm}{\begin{pmatrix}}
\newcommand{\eeem}{\end{pmatrix}}
\newcommand{\bbN}{{\bf N}}
\newcommand{\cE}{{\cal E}}
\newcommand{\R}{{\mathbf R}}
\newcommand{\al}{\alpha}
\newcommand{\bt}{\beta}
\newcommand{\ep}{\epsilon}
\newcommand{\la}{\lambda}
\newcommand{\ignore}[1]{}{}
\newcommand{\noin}{\noindent}
\newcommand{\nn}{\nonumber}
\newcommand{\p}{{\partial}}
\newcommand{\q}{\quad}
 \newcommand{\Id}{\mathop{\rm Id}}
 \newcommand{\mab}{{\mu_{\alpha,\beta}}}
\newcommand{\mi}{{\hat \mu}}
\newcommand{\Mi}{{\hat M_0}}
\newcommand{\Miw}{{\hat M}}
\newcommand{\Prob}{{\mathcal P}}
\newcommand{{\QED}}{{\hfill QED} \smallskip}
\newcommand{\Rab}{R_{\alpha,\beta}}
\newcommand{\Rn}{{\R^n}}
\newcommand{\spt}{\mathop{\rm spt}}
\newcommand{\Vab}{{V_{\al,\beta}}}
\newcommand{\Vib}{{V_{\infty,\beta}}}
\newcommand{\Wa}{W_\alpha}
\newcommand{\Wb}{{W_\beta}}
\newcommand{\Wab}{W_{\alpha,\beta}}
\newcommand{\wab}{w_{\alpha,\beta}}
\newcommand{\wib}{w_{\infty,\beta}}
\newcommand{\was}{w_{\alpha,\alpha^*}}
\newcommand{\wsb}{w_{\alpha^*,\beta}}
\newcommand{\Wib}{W_{\infty,\beta}}
\newcommand{\Wit}{{W_{\infty,2}}}
\newcommand{\Wt}{{W_{2}}}
\renewcommand{\subset}{\subseteq}
\renewcommand{\phi}{\varphi}
\newcommand{\cal}{\mathcal}
\DeclareMathOperator{\conv}{conv}
  \DeclareMathOperator*{\diam}{diam}
 \DeclareMathOperator*{\argmin}{argmin}
\DeclareMathOperator*{\Var}{{Var}}
\DeclareMathOperator*{\bary}{{\bar x}}
\numberwithin{equation}{section}
\numberwithin{theorem}{section}
\begin{document}
%\runningtitle{AD}
\title
[When do particles self-assemble into a regular simplex?] {
Isodiametry, variance, and regular simplices from particle interactions
}
\thanks{\em  TL is grateful for the support of ShanghaiTech University, and in addition, to the University of Toronto and its Fields Institute for the Mathematical
Sciences, where parts of this work were performed.  RM  acknowledges partial support of his research by
the Canada Research Chairs Program and Natural Sciences and Engineering Research Council of Canada Grants 217006-15 and -20.
{The authors are grateful to Andrea Bertozzi, Almut Burchard, {Tomasz Tkocz
 and an anonymous seminar participant at Seoul National University
 for stimulating interactions,} and to Hyejung Choi for drawing the figures.
}
\copyright 2020 by the authors.
}

\date{\today}

\author{Tongseok Lim and Robert J. McCann}
\address{Tongseok Lim: Krannert School of Management \newline  Purdue University, West Lafayette, Indiana 47907}
\email{lim336@purdue.edu}
\address{Robert J. McCann: Department of Mathematics \newline University of Toronto, Toronto ON Canada}
\email{mccann@math.toronto.edu}

\begin{abstract}
Consider a collection of particles 
interacting through an attractive-repulsive potential given as a difference of power laws 
and normalized so that its unique minimum occurs at unit separation.  For a range of exponents corresponding to mild repulsion and strong attraction, we show
that  the minimum energy configuration is uniquely attained 
 --- apart from translations and rotations --- by equidistributing
the particles over the vertices of a regular top-dimensional simplex (i.e. an equilateral triangle in two dimensions and regular tetrahedron in three).
If the attraction is not assumed to be strong, we show these configurations are at least local energy minimizers in the relevant
$d_\infty$ metric from optimal transportation,  as are all of the other 
uncountably many unbalanced configurations with the same support.  
We infer the existence of phase transitions.

The proof is based in part on a simple isodiametric variance bound which
characterizes regular simplices:  it shows that among probability measures on $\R^n$ whose supports
have at most unit diameter,  the variance around the mean is maximized precisely by 
those measures which assign mass $1/(n+1)$ to each vertex of a (unit-diameter) regular simplex.
\end{abstract}

\maketitle
\noindent\emph{Keywords: isodiametric variance bound, interacting particles,
regular simplex, aggregation equation, self-assembly, repulsive-attractive,  power law potentials, 
$L^\infty$-Kantorovich-Rubinstein-Wasserstein metric, mildly repulsive, 
self-organizing, pattern formation, calculus of variations,  symmetry breaking,  
interaction energy, Jung's inequality}

\noindent\emph{MSC2010 Classification 70F45, 35Q92, 49S05, 52C17, 91D25, 92D50 }
\section{Introduction}

The energy of {a collection of interacting} particles with mass distribution $d\mu(x) \ge 0$ on $\R^n$ is given by 
\ba \label{energy}
 \cE_W(\mu) = \iint_{\R^n \times \R^n} W(x-y) d\mu(x)d\mu(y),
\end{align}
assuming the particles interact with each other through a pair potential $W(x)$.   Normalizing
the  collection of particles to have unit mass ensures that $\mu$ belongs to the space $\Prob(\R^n)$ of Borel
probability measures on $\R^n$. 

Our goal is to identify local and global energy minimizers
of $\cE_W(\mu)$ on $\Prob(\Rn)$,  for {\em power-law} potentials $W=\Wab$ where
\begin{align} \label{potential}
\Wa &:= |x|^\al/\al  \q {\rm and}
\\ \Wab(x) &:= \Wa(x) - \Wb(x) \q  \alpha \ge \beta>-n
\label{potential2}
\end{align}
is of {\em attractive-repulsive} type $\al>\beta$; here $\al$ is the exponent of attraction,
$\beta$ is the exponent of repulsion, 
and we have chosen units of length 
so that $\Wab$ is minimized precisely on the unit sphere $|x|=1$.  
The Lennard-Jones potentials \cite{Lennard-Jones24} 
 fall into this class, 
including  $(\al,\beta)=(-6,-12)$, except that we will be concerned almost exclusively
with power laws having positive rather than negative exponents, particularly 
those in the {\em mildly repulsive} triangle $\al > \beta \ge 2$ 
investigated by the quartet and trio composed of Balagu\'e, Carrillo, Laurent and Raoul~\cite{BalagueCarrilloLaurentRaoul13} and Carrillo, Figalli and Patacchini~\cite{CarrilloFigalliPatacchini17} respectively.
The term {\em mildly repulsive} reflects the fact that $W$ flattens out around the origin (and the Hausdorff dimension of the support of the minimizer 
decreases~\cite{BalagueCarrilloLaurentRaoul13}) as $\beta$ increases.
We shall be particularly interested in the behaviour of the problem on the boundary of the mildly repulsive triangle:
this consists of three lines which we call the {\em hard confinement} limit $\alpha = +\infty$,
the {\em centrifugal line} $\beta =2$ and the {\em null line} $\alpha=\beta$, on which the energy is identically zero.  (The line $\alpha=2$ is also distinguished;
for reasons explained below we call it the {\em centripetal line} even though it lies outside our triangle of interest.)

Our first result concerns behaviour near the hard confinement limit.
For each $\beta \ge 2$,
if $\al$ is sufficiently large it asserts the energy \eqref{energy} is uniquely minimized on $\Prob(\R^n)$ by measures $\mu$ which
equidistribute their mass over the vertices of a unit-diameter regular simplex. 
This confirms a phenomenon which has often been observed in  
dynamical simulations
\cite{AlbiBalagueCarrilloVonBrecht14} 
\cite{BalagueCarrilloLaurentRaoul13} \cite{BertozziKolokolnikovSunUminskyVonBrecht15} 
\cite{CraigBertozzi16}
yet has largely defied explanation. 
Apart from results in one-dimension due to Kang, Kim, Lim and Seo~\cite{KangKimLimSeo19p} and  their references,  
the best understanding to date of this mildly repulsive phenomenology
comes from work of the 
quartet~\cite{BalagueCarrilloLaurentRaoul13}, who established that local minimizers
vanish outside a countable set,  and the 
trio~\cite{CarrilloFigalliPatacchini17}, who gave a geometric restriction
on the shape of this support which translated into a bound on the number of points it contains in the case of global minimizers,
and which we can now replace with its sharp value $n+1$ at least in the range of validity of our results.

The behaviour we describe is very different from what happens when the repulsion is stronger \cite{FetecauHuangKolokolnikov11} 
\cite{FetecauHuang13}: when $\beta \in (-n,2]$, 
the functional \eqref{energy} admits spherically symmetric critical points given by densities if either $\al$ or $\beta$ is even \cite{CarrilloHuang17}
or if $\al<0$ \cite{ChoksiFetecauTopaloglu15};
some of these are conjectured 
to be global energy minimizers --- a conjecture which has been proven at the point $(\al,\beta) = (2,2-n)$ where Newtonian repulsion competes with centripetal attraction 
by Choksi, Fetecau and Topaloglu \cite{ChoksiFetecauTopaloglu15},
and which 
follows from the convexity established by Lopes \cite{Lopes19} in the larger rectangle $(\al,\beta) \in [2,4] \times (-n,0)$ whose left boundary is the centripetal line. 
Even in two dimensions a 
wide variety of behaviours interpolating between this regime and ours 
has been reported by, e.g.,~Kolokolnikov, Uminsky and Bertozzi with Sun \cite{KolokolnikovSunUminskyBertozzi11} and with von Brecht 
\cite{vonBrechtUminskyKolokolnikovBertozzi12}.
Very recently,  the analogous problem has been studied under an incompressiblity constraint imposed by a uniform bound on the density of $\mu$  \cite{BurchardChoksiTopaloglu18}: 
Frank and Lieb \cite{FrankLieb18} established the presence of a phase transition as the bound is varied;
 it is in this context that the work of Lopes is set.  A few subsequent developments concerning nonlocal interaction energies can be found in Frank and Lieb \cite{FrankLieb19}
 and Delgadino, Yan and Yao \cite{DelgadinoYanYao20}.

Much of the interest in minimizers of the functional \eqref{energy} stems
from the fact that it is a Lyapunov functional \cite{CarrilloMcCannVillani03} \cite{CarrilloFigalliPatacchini17}
for the {\em self-assembly} or {\em aggregation} equation \cite{MogilnerEdelstein-Keshet99}
\ba \label{dynamics}
\frac{\p \mu}{\p t} = \nabla \cdot (\mu \nabla W* \mu),
\end{align}
modeling dissipation-dominated 
dynamics for a large number of particles interacting
through the pair potential $W$; see e.g.~\cite{CarrilloChoiHauray14} and the references there. Families of local
energy minimizers of \eqref{energy} therefore form stable manifolds
for the dynamics \eqref{dynamics}.  The shape of $W$ has been chosen so that it is energetically favorable 
for particles to try to position themselves at unit distance apart,  to the extent this is feasible given the large number of 
particles. Dynamics analogous to \eqref{dynamics} have been proposed as models for the kinetic flocking and swarming behaviour of biological organisms
\cite{MogilnerEdelstein-Keshet99} \cite{TopazBertozziLewis06}, self-assembly and condensation of granular media \cite{Toscani00} and nanomaterials \cite{HolmPutkaradze06},
and even strategies in game theory \cite{BlanchetCarlier14}.

The fact that the minimizers we describe break the rotational symmetry of the functional \eqref{energy} already 
suggests that the problem is unlikely to yield to the usual convexity or symmetrization techniques from the calculus of variations 
\cite{Kawohl85} \cite{McCann97}  \cite{BorweinZhu13} \cite{CarrilloHittmeirVolzoneYao16p}.  Instead we extend the definition \eqref{potential} to $\alpha =+\infty$ by setting
\[
W_\infty (x):= \lim_{\alpha \to \infty} \Wa(x) 
\]
so that 
\begin{align} \label{i-potential}
\Wib(x):=
\left\{
\begin{array}{cc}
-\Wb(x) 
\text{ if } \  |x| \le 1, \\
+\infty \  \text{ if } \  |x| > 1,
\end{array}
\right.
\end{align}
and work perturbatively around this 
{hard confinement} limit, for which we analyze the minimization problem
\begin{align} \label{problem}
\min_{\mu \in {\cal P}(\R^n)} \cE_{\Wab} (\mu), \q \infty \ge \al \ge \beta \ge 2
\end{align}
by comparing it to the corner case $(\al,\beta)=(\infty,2)$ where hard confinement meets the centrifugal line.
Such an approach to the more repulsive regime $\beta <0$ with an incompressibility constraint 
was also suggested by Burchard, Choksi and Topaloglu \cite{BurchardChoksiTopaloglu18}, 
 and subsequently pursued by Burchard, Choksi and Hess-Childs in parallel with the present work \cite{BurchardChoksiHess-Childs20}.
What distinguishes the centrifugal (respectively centripetal) line is that,
for probability measures $\mu \in \Prob(\R^n)$ with second moments,
the elementary calculation \ba\label{variance}
\cE_{\Wt}(\mu) 
&= \int_{\R^n} |x|^2 d\mu(x) - |\bary(\mu)|^2 =: \Var(\mu),
\\  \text{where}\ \bary(\mu) &:=  \int_{\R^n} x d\mu(x) \q \text{is the barycenter of}\ \mu,
\label{barycenter}
\end{align}
shows that the repulsive (respectively attractive) term in the energy reduces to the variance of $\mu$ around its mean, 
 as in e.g.~\cite{ChoksiFetecauTopaloglu15}.   
Moreover, the variance \eqref{variance} becomes a {\em linear} (as opposed to quadratic) function of $\mu$ when restricted to measures 
\be\label{zero mean}
\Prob_0(\R^n) := \{ \mu \in \Prob(\R^n) \mid \int_\Rn |x|^2 d\mu(x) <+\infty \ {\rm and}\ \bary(\mu) = 0\}
\end{equation}
with center of mass at the origin; this restriction costs no generality since the 
energies \eqref{energy} are  invariant under rigid motions of $\mu$.   The contribution of the variance to the total energy
leads to a term in the Euler-Lagrange equation \eqref{Euler-Lagrange} (found e.g.~in \cite{McCann06} \cite{BalagueCarrilloLaurentRaoul13} 
for our problem) representing a force 
either towards or away from the center of mass ---  depending on whether we are on the centripetal or centrifugal line --- 
and growing linearly with the distance.  This is precisely analogous
to the force which appears in a pressureless model of rotating stars (or in a centrifuge) in $n \le 2$ dimensions; see \cite{McCann06} 
and its references.  The analogy breaks down if $n \ge 3$, 
since our force pulls towards a point rather than an axis of rotation,  but the use of the terms centrifugal and centripetal continues to be justified
by their Latin roots.

The corner case $(\al,\beta)=(\infty,2)$ corresponds to maximizing the 
variance of $\mu$ around its center of mass subject to a constraint on the diameter of the {\em support} 
$\spt \mu \subset \R^n$,  meaning the smallest closed set containing the full mass of $\mu$.
The maximum is attained {\em if and only if $\mu$ equidistributes its mass over the vertices of a regular,
unit diameter $n$-simplex.}  As an appendix explains,  this characterization of the simplex follows from 
(and is equivalent to) an old theorem of Jung \cite{Jung01}. 
Unaware of Jung's theorem, we developed a linear programming and convex-duality based proof 
of this characterization in a companion work \cite{LimMcCann19p}, originally circulated together with
the present results in a single manuscript.

Notice this variational characterization of the simplex
already exhibits symmetry-breaking: although the objective functional \eqref{variance} and its
domain are invariant under rigid motions of $\mu$,  
its extremizers fail to be invariant under either translations or rotations {(see figure \ref{fig:simplex})}.  Nevertheless,  the extremizers are unique apart from such rigid motions.
The usual convexity and symmetrization techniques from the calculus of variations do not easily accommodate optimizations which break symmetries
\cite{Kawohl85} \cite{McCann97} \cite{BorweinZhu13}.  
This characterization plays a key role in the proof of our first main result, 
 whose formulation relies on the following definitions:

\begin{definition}[Simplices]\label{D:simplices} 
(a) A set $K \subset \R^n$ is called a {\em top-dimensional simplex} if $K$ has non-empty interior and is the convex hull of $n+1$ points $\{x_0, x_1,...,x_n\}$ in $\R^n$.
\smallskip

 \noin(b) A set $K \subset \R^n$ is called a {\rm regular $ k$-simplex} if it is the convex hull of $ k+1$ points $\{x_0, x_1,...,x_{k}\}$ in $\R^n$ satisfying $|x_i-x_j|=d$ for some $d>0$ and all $0 \le i < j \le {k}$. 
 The points $\{x_0, x_1,...,x_{k}\}$ are called {\em vertices} of the simplex. 
 \smallskip
  
\noin  (c) In particular, it is called a {\em unit $k$-simplex} if $d=1$.
\end{definition}

\begin{remark}[Regular $n$-simplices $K \subset \Rn$ are top-dimensional] 
\label{R:standard simplex}
A regular $n$-simplex with sidelength $d=\sqrt{2}$ is {linearly} isometric to the following {\em standard simplex} in $\R^{n+1}$
\begin{equation}\label{standard simplex}
\Delta^n := \{ a=\{a_1,...,a_{n+1}\} \in [0,1]^{n+1} \mid \sum_{i=1}^{n+1} a_i =1 \}, 
\end{equation}
which can be verified by simple induction on dimension. We shall use this fact tacitly throughout.
\end{remark}

\begin{figure}[h!]
  \centering
  \begin{subfigure}[b]{0.49\linewidth}
    \includegraphics[width=\linewidth]{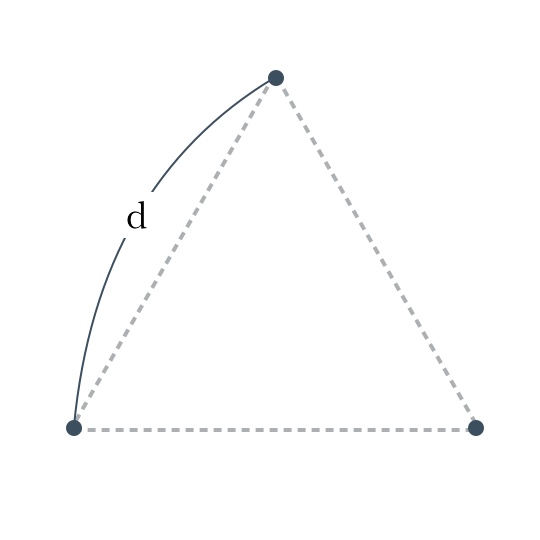}
    \caption{$\spt(\mu)$ in $\R^2$.}
  \end{subfigure}
  \begin{subfigure}[b]{0.49\linewidth}
    \includegraphics[width=\linewidth]{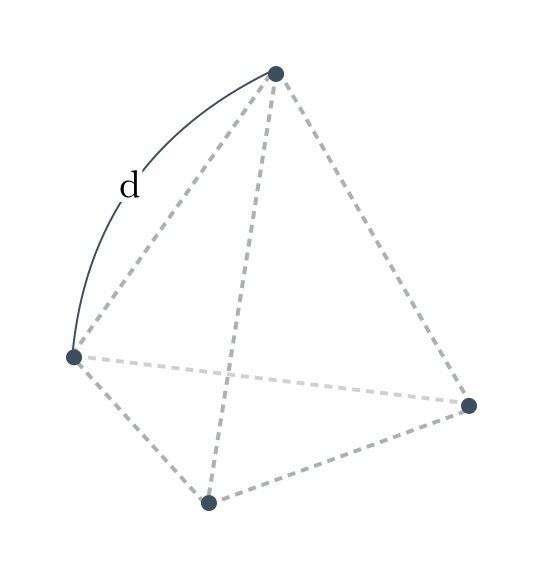}
    \caption{$\spt(\mu)$ in $\R^3$.}
  \end{subfigure}
  \caption{Support of the optimizer $\mu$ in Theorem \ref{T:global}.}
  \label{fig:simplex}
\end{figure}

We can now state our main results.

\begin{theorem}[Mild repulsion with strong attraction is minimized uniquely by the unit $n$-simplex]
\label{T:global} 
Fix $\beta \ge 2$. For all $\al \in [\beta,\infty)$ sufficiently large,
a probability measure $\mu$ minimizes \eqref{problem}  if and only if it is uniformly distributed over the vertices of a unit $n$-simplex. 
\end{theorem}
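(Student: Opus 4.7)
My plan is to solve the hard-confinement limit $\al=\infty$ directly and then transfer the conclusion to all sufficiently large finite $\al$ by compactness plus a rigidity argument. At $\al=\infty$, admissibility forces $\diam(\spt\mu)\le 1$, and the problem reduces to an isodiametric one whose unique solution (up to rigid motion) is the uniform unit simplex. Weak-$*$ compactness then pins down the shape of minimizers at large finite $\al$, and a finite-dimensional rigidity analysis upgrades weak closeness to exact equality.

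\textbf{Step 1: The limit problem.} For $\mu$ with $\diam(\spt\mu)\le 1$, the elementary inequality $t^\beta\le t^2$ on $[0,1]$ (valid for $\beta\ge 2$), together with \eqref{variance}, yields
\[
\cE_{\Wib}(\mu)=-\tfrac{1}{\beta}\iint|x-y|^\beta\,d\mu(x)\,d\mu(y) \ge -\tfrac{1}{\beta}\iint|x-y|^2\,d\mu(x)\,d\mu(y) = -\tfrac{2}{\beta}\Var(\mu).
\]
The isodiametric variance bound advertised in the abstract (and obtained via Jung's theorem, as indicated later in the paper) gives $\Var(\mu)\le n/(2(n+1))$, with equality precisely for the uniform unit $n$-simplex. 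Since all pair-distances in such a simplex equal $1$, it saturates both inequalities simultaneously, establishing it as the unique minimizer of $\cE_{\Wib}$ (up to rigid motion) for every $\beta\ge 2$.

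\textbf{Step 2: Compactness as $\al\to\infty$.} Pick $\al_k\to\infty$ and let $\mu_k$ minimize $\cE_{W_{\al_k,\beta}}$. Testing against a uniform unit simplex gives the upper bound $\cE_{W_{\al_k,\beta}}(\mu_k)\le \tfrac{n}{n+1}(1/\al_k-1/\beta)\to -n/(\beta(n+1))$. After centering $\mu_k$ at the origin (allowed by rigid-motion invariance), the attractive term confines the supports in a common ball; extract a weak limit $\mu_\infty$. If $\spt\mu_\infty$ contained two points at distance $>1$, then by the Portmanteau theorem, $\tfrac{1}{\al_k}\iint|x-y|^{\al_k}\,d\mu_k\,d\mu_k$ would blow up exponentially in $\al_k$, contradicting the upper bound; hence $\diam(\spt\mu_\infty)\le 1$. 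Weak continuity of the repulsive integral on the common ball (using $\beta\ge 2$) together with nonnegativity of the attractive term gives $\cE_{\Wib}(\mu_\infty)\le \liminf_k \cE_{W_{\al_k,\beta}}(\mu_k)\le -n/(\beta(n+1))$, so Step 1 identifies $\mu_\infty$ as a uniform unit $n$-simplex.

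\textbf{Main obstacle: from closeness to exact equality.} Step 2 yields only weak-$*$ closeness of $\mu_\al$ to the orbit of uniform unit simplices. To conclude that $\mu_\al$ is itself such a simplex for all $\al$ sufficiently large, I would combine three ingredients: (i) the quartet's result \cite{BalagueCarrilloLaurentRaoul13} that minimizers have countable support; (ii) the trio's geometric restriction \cite{CarrilloFigalliPatacchini17} on this support, which combined with weak-closeness to a unit simplex should pin the cardinality $|\spt\mu_\al|=n+1$ for $\al$ large; and (iii) a finite-dimensional analysis: writing $\mu=\sum_{i=0}^{n}p_i\delta_{y_i}$, a Hessian computation of $\cE_{\Wab}$ at the uniform simplex, after quotienting by rigid motions, should establish non-degeneracy transverse to the simplex orbit, forcing any nearby critical measure to lie on that orbit. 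The principal technical hurdle is (ii), namely the semicontinuity argument needed to control $|\spt\mu_\al|$ for finite $\al$ by upgrading the cited prior results; verifying the correct Hessian signature in (iii) in suitably chosen symmetry-adapted coordinates is the secondary difficulty.
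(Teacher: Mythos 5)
Your Steps 1 and 2 are sound and essentially reproduce the easy part of the paper's argument (the hard-confinement characterization via the isodiametric variance bound, and compactness/$\Gamma$-convergence of minimizers as $\al\to\infty$, cf.\ Corollary \ref{C:hard confinement} and Lemma \ref{L:Gamma convergence}--Corollary \ref{C:minimizers to minimizers}). But the step you yourself flag as the ``main obstacle'' is precisely where the theorem lives, and your sketch (i)--(iii) does not close it. First, weak-$*$ (or $d_2$) closeness to the simplex orbit is compatible with a minimizer having many atoms clustered near each vertex, or even diffuse mass of small variance there, together with tiny mass elsewhere; nothing in (i)--(ii) rules this out. The cardinality bound of Carrillo--Figalli--Patacchini is far larger than $n+1$ and is not improved by weak closeness, and at $\beta=2$ finite support is not even available a priori for all $\al$ (at $(\al,\beta)=(4,2)$ there is a spherically symmetric minimizer, so any finiteness statement must already exploit largeness of $\al$ --- which is what needs proving). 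Second, the Hessian computation in (iii) is carried out on the finite-dimensional manifold of measures of the form $\sum_{i=0}^n p_i\delta_{y_i}$; it simply cannot see competitors that are not of that form (clusters, extra small atoms), so ``non-degeneracy transverse to the orbit'' does not force a nearby \emph{measure-valued} minimizer onto the orbit. (Also note, as a caution about the phrasing ``any nearby critical measure lies on the orbit'': by Theorem \ref{T:local} every unequal-mass distribution on the vertices of a unit simplex is a $d_\infty$-local minimizer, so criticality alone cannot be the right notion; you must use global minimality.)

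The paper fills exactly this gap with two pieces of machinery you do not have. A first-variation argument (Proposition \ref{P:convergence of potentials} and Corollary \ref{C:concentration}) uses the Euler--Lagrange condition \eqref{Euler-Lagrange} together with uniform convergence estimates for the potentials $\mab*\Wab$ (exterior, boundary, and interior estimates, plus Lemma \ref{L:stratified} showing the limiting potential is minimized only at the vertices) to conclude that \emph{all} of the mass of a minimizer lies within distance $r$ of the vertex set --- something weak convergence alone cannot give. Then the quantitative second-variation theorem (Theorem \ref{T:concentration}) shows that any measure $d_\infty$-close to an atomic simplex measure with no larger energy must have zero variance in each ball $B_r(x_i)$ and barycenters at exact unit mutual distances, i.e.\ it \emph{is} $n+1$ atoms on a unit simplex; this is the collapse-of-clusters step missing from your plan, and it is where the thresholds on $\al$ (and the delicate case $\beta=2$) actually enter. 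Finally, equidistribution of the masses follows by comparing, within the class of measures supported on the simplex vertices, with the hard-confinement characterization (Corollary \ref{C:hard confinement}), much as your finite-dimensional computation $\cE=w(1)(1-\sum_i p_i^2)$ would do. As written, your proposal is an outline whose decisive middle portion is acknowledged but not supplied, and the route you indicate for it (citing the prior support bounds plus a configuration-space Hessian) would not succeed without replacing it by arguments of the above type.
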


The following corollary reframes this theorem: 
\begin{corollary}[Phase transition threshold]
\label{C:main} 
For each $\beta \ge 2$ {and  $n \in \bbN$}, there is a minimal value {$\al_{\Delta^n}= \al_{\Delta^n}(\bt) \in [\bt,\infty)$}
such that: for each $\al > \al_{\Delta^n}$,
a probability measure $\mu$ minimizes \eqref{problem}  if and only if $\mu$ assigns mass $1/(n+1)$ to each vertex of a unit $n$-simplex.
\end{corollary}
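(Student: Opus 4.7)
The plan is to extract $\al_\Delta(\beta)$ as the infimum of the parameter region on which Theorem~\ref{T:global} already delivers the simplex characterization, so that all of the analytic substance is absorbed into that theorem and the corollary reduces to a definitional exercise. For fixed $\beta\ge 2$ I would introduce the set
\[
B(\beta):=\{\al_0\ge\beta\mid \text{for every } \al>\al_0,\ \text{the minimizers of }\eqref{problem}\ \text{are exactly the uniform measures on the vertices of a unit }n\text{-simplex}\}
\]
and define $\al_\Delta(\beta):=\inf B(\beta)$, with the convention that the infimum is taken in $[\beta,+\infty]$.

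The first step is to check $\al_\Delta(\beta)\in[\beta,\infty)$: the lower bound is built into the defining condition $\al_0\ge\beta$, while finiteness follows from Theorem~\ref{T:global}, which supplies some $\al^*\in[\beta,\infty)$ such that the simplex characterization holds for every $\al\ge\al^*$; hence $\al^*\in B(\beta)$ and $\al_\Delta(\beta)\le\al^*<\infty$.

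The second step is to verify the ``if and only if'' above the threshold. For any $\al>\al_\Delta(\beta)=\inf B(\beta)$, the definition of infimum produces some $\al_0\in B(\beta)$ with $\al_0<\al$, and the defining property of $B(\beta)$ then gives exactly the desired simplex characterization at this particular $\al$. Minimality of $\al_\Delta(\beta)$ is automatic: any candidate $\al'<\al_\Delta(\beta)$ for which the corollary's property also held on $(\al',\infty)$ would itself belong to $B(\beta)$, contradicting $\al'<\inf B(\beta)$. I do not foresee any genuine obstacle here, as every nontrivial ingredient has been isolated in the preceding theorem; the only care needed is to keep the quantifier ``for all $\al$ sufficiently large'' from Theorem~\ref{T:global} in exact alignment with the set-builder defining $B(\beta)$.
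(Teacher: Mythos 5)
Your proposal is correct and is essentially the paper's own argument: the paper likewise observes that a minimal threshold in $[\bt,\infty]$ "obviously exists" (your infimum-of-thresholds construction simply makes this explicit, using that the defining property is monotone in the threshold) and then invokes Theorem \ref{T:global} to conclude $\al_\Delta(\bt)<\infty$. No gaps; the quantifier alignment you flag is exactly the only point requiring care, and you handle it correctly.
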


\noin Proof: For each $\beta \ge 2$ {and $n \in \bbN$}, a minimal $\al_{\Delta^n} \in [\bt,\infty]$ having the stated property obviously exists.
Theorem \ref{T:global} asserts it is finite: $\al_{\Delta^n}<\infty$.
\QED

\begin{remark}[Existence of phase transitions and future directions]\label{R:Lopes}
When $(\al,\beta)=(4,2)$, a result of Lopes \cite{Lopes19} implies that $\cE_{W_{4,2}}(\mu)$ is a convex function of $\mu$.
As a consequence, it must possess at least one spherically symmetric {minimizer},  hence {$\al_{\Delta^n}(2) \ge 4$.}
This establishes a phase transition by showing that the intervals $[2,\alpha_{\Delta^n}(2)]$ and $[\al_{\Delta^n}(2),\infty]$ both have non-empty interiors.
It would be interesting  to understand more about the properties of the threshold function {$\al_{\Delta^n}: [2,\infty)  \to [2,\infty)$,} 
and the behaviour of solutions when $\al$ is at or below the threshold, 
{and similarly of the threshold {$\alpha_{\Delta^n}^{loc}$} for strict local energy minimization when $\beta=2$ introduced at Corollary \ref{C:locastar}.}
  We leave such questions to future research.
\end{remark}

Our second main result concerns local energy minimizers in the Kantorovich-Rubinstein-Wasserstein $d_\infty$ metric
from optimal transportation, whose definition is recalled at \eqref{KRW metric} below.  This is the relevant metric on $\Prob(\Rn)$ for particles
moving at
bounded speeds,  as noted by one of us in \cite{McCann06}, and for the present problem by the quartet \cite{BalagueCarrilloLaurentRaoul13}.

\begin{theorem}[All distributions over unit simplex vertices are $d_\infty$-local energy minimizers] \label{T:local}
Fix $\al>\bt \ge 2$ and any measure $\mi \in \Prob(\Rn)$ whose support $\spt \mi$ coincides with the vertices 
$X =  \{x_0,\ldots,x_n\}$ of a unit $n$-simplex, ordered so that the $m_i := \mi[\{x_i\}]$ are non-decreasing.

If $\bt>2$ or if $\al>  2+  \frac{m_n^2 \min\{n,2\}}{m_0m_1}$,
then there exists $r>0$ such that each $\mu \in \Prob(\Rn)$ with $d_\infty(\mu,\mi)<r$ satisfies
$\cE_{\Wab}(\mu) \ge \cE_{\Wab}(\mi)$, and the inequality is strict unless $\mu$ is a rotated translate of $\mi$. 
\end{theorem}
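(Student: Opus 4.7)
The plan is to decompose $\mu$ into contributions near each of the $n+1$ vertices, Taylor-expand the energy to second order in the cluster perturbations, and prove a Hessian coercivity estimate modulo rigid motions.  Since the minimum pairwise separation of $X$ equals $1$, choosing $r<1/2$ forces every $d_\infty$-coupling of $\mi=\sum_i m_i\delta_{x_i}$ and $\mu$ to keep the mass starting at $x_i$ inside $B_r(x_i)$.  Thus $\mu=\sum_{i=0}^n \mu_i$ with $\spt\mu_i\subseteq B_r(x_i)$ and $\mu_i(\R^n)=m_i$.  I will record the cluster barycenter displacements $b_i:=m_i^{-1}\!\int(\xi-x_i)\,d\mu_i(\xi)$ and covariance matrices $\Sigma_i$ of $\mu_i/m_i$, which obey $|b_i|\le r$ and $\operatorname{tr}\Sigma_i\le r^2$.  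By the translation invariance of $\cE_{\Wab}$ I may further assume $\sum_i m_ib_i=0$.

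Next, split the energy as $\cE_{\Wab}(\mu)=\sum_{i,j}I_{ij}$ with $I_{ij}:=\iint\Wab(x-y)\,d\mu_i(x)d\mu_j(y)$, and set $z_{ij}:=x_i-x_j$ (a unit vector for $i\ne j$).  Since $\Wab$ attains its global minimum on $\{|z|=1\}$, one has $\nabla\Wab(z_{ij})=0$ and
\[
D^2\Wab(z_{ij})=(\alpha-\beta)\,z_{ij}z_{ij}^T.
\]
Second-order Taylor expansion with remainder, followed by integration against $d\mu_i\,d\mu_j$, then yields
\[
I_{ij}-m_im_j\Wab(z_{ij})\ge \tfrac{\al-\bt}{2}m_im_j\bigl[z_{ij}^T\Sigma_iz_{ij}+z_{ij}^T\Sigma_jz_{ij}+((b_i-b_j)\cdot z_{ij})^2\bigr]-Cr\,m_im_j\bigl(\operatorname{tr}\Sigma_i+\operatorname{tr}\Sigma_j+|b_i-b_j|^2\bigr).
\]
For the self terms I will use $\Wab(z)\ge-|z|^\bt/\bt$ together with $\iint|x-y|^2\,d\mu_id\mu_i=2m_i^2\operatorname{tr}\Sigma_i$ to conclude $I_{ii}\ge -m_i^2\operatorname{tr}\Sigma_i$ when $\bt=2$, and $|I_{ii}|\le Cr^{\bt-2}m_i^2\operatorname{tr}\Sigma_i$ when $\bt>2$.

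Summing produces $\cE_{\Wab}(\mu)-\cE_{\Wab}(\mi)\ge (1-Cr)(Q_\Sigma+Q_b)$, where $Q_b:=\tfrac{\al-\bt}{2}\sum_{i\ne j}m_im_j((b_i-b_j)\cdot z_{ij})^2\ge 0$ and
\[
Q_\Sigma=\sum_i m_i\operatorname{tr}\bigl(\Sigma_i\bigl[(\al-\bt)N_i-\kappa_\bt m_iI\bigr]\bigr),\qquad N_i:=\sum_{j\ne i}m_jz_{ij}z_{ij}^T,
\]
with $\kappa_2=1$ and $\kappa_\bt=O(r^{\bt-2})\to 0$ as $r\to 0$ for $\bt>2$.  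For $\bt>2$ the bracket in $Q_\Sigma$ is eventually positive definite as $r\to 0$, irrespective of $\al$.  For $\bt=2$ one needs $\lambda_{\min}(N_i)\ge m_i/(\al-2)$ for every $i$; a symmetry calculation (using $\sum_ix_i=0$ and $|x_i|^2=n/(2(n+1))$) gives $M_i:=\sum_{j\ne i}z_{ij}z_{ij}^T=\tfrac12I+\tfrac n2\hat x_i\hat x_i^T$ with $\lambda_{\min}(M_i)=1/2$, and refining the bound by isolating the two smallest weights $m_0,m_1$ leads to $\lambda_{\min}(N_i)\gtrsim m_0m_1/m_n$, giving the sufficient threshold $\al>2+m_n^2\min\{n,2\}/(m_0m_1)$ stated in the theorem.

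Finally, $Q_b=0$ if and only if $(b_i-b_j)\cdot(x_i-x_j)=0$ for all $i\ne j$, which by a standard argument characterizes $(b_i)$ as an infinitesimal rigid motion $b_i=Ax_i+c$ with $A+A^T=0$; a dimension count matches $n+n(n-1)/2=n(n+1)/2$, the dimension of the Euclidean group on $\R^n$.  Thus $Q_\Sigma+Q_b$ is coercive on the quotient by rigid motions, and for $r$ small, $\cE_{\Wab}(\mu)>\cE_{\Wab}(\mi)$ strictly unless every $\Sigma_i=0$ and $(b_i)$ lies in this kernel.  In that remaining case $\mu=\sum_i m_i\delta_{y_i}$ with $y_i:=x_i+b_i$, and the pointwise minimum property $\Wab(y_i-y_j)\ge\Wab(z_{ij})$ forces $|y_i-y_j|=1$ for every $i\ne j$; being a unit $n$-simplex close to $\{x_i\}$, $\{y_i\}$ is obtained from it by a rigid motion.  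The main technical hurdle will be obtaining the sharp lower bound $\lambda_{\min}(N_i)\gtrsim m_0m_1/m_n$ rather than the cruder $m_0/2$, which is what upgrades the simple decoupled threshold $\al>2+2m_n/m_0$ to the form stated in the theorem.
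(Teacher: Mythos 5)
Your overall strategy---freezing the cluster masses via $d_\infty$ with $r<1/2$, expanding the interaction to second order in the cluster variances $\Sigma_i$ and barycenter displacements $b_i$, and comparing against the eigenvalues of $N_i=\sum_{j\ne i}m_j z_{ij}z_{ij}^T$---is essentially the strategy of the paper's Theorem \ref{T:concentration}, and your threshold bookkeeping is sound; in fact the ``main technical hurdle'' you flag at the end is not one: since $m_i\le m_n$ and $\min_{j\ne i}m_j\ge m_0$ (and $=m_1$ when $i=0$), the crude bound $N_i\ge\frac{\min_{j\ne i}m_j}{\min\{n,2\}}\Id$ already shows that $\al>2+\frac{m_n^2\min\{n,2\}}{m_0m_1}$ implies $(\al-2)\lambda_{\min}(N_i)>m_i$ for every $i$ (note $m_0m_1/m_n\le m_0$, so your ``refined'' bound is weaker, not sharper, than the crude one). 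The genuine gap is the step ``Summing produces $\cE_{\Wab}(\mu)-\cE_{\Wab}(\mi)\ge(1-Cr)(Q_\Sigma+Q_b)$.'' Your pairwise remainder is of size $Cr\,m_im_j(\operatorname{tr}\Sigma_i+\operatorname{tr}\Sigma_j+|b_i-b_j|^2)$, and the $|b_i-b_j|^2$ part contains the components of $b_i-b_j$ transverse to $z_{ij}$, which $Q_b$ does not control ($Q_b$ only penalizes $((b_i-b_j)\cdot z_{ij})^2$). So the error cannot be absorbed into a multiplicative factor $(1-Cr)$, and the displayed inequality is in fact false: if $\mu$ is a genuine rotation $R\ne\Id$ of $\mi$ within $d_\infty$-distance $r$, then the left side is $0$, while $\Sigma_i=0$ and $(b_i-b_j)\cdot z_{ij}=z_{ij}\cdot Rz_{ij}-1<0$ for some pair, whence $Q_b>0$. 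The same defect breaks the endgame: for configurations with tiny variances and tangential barycenter shifts, $Q_\Sigma+Q_b$ can be far smaller than the uncontrolled error $Cr\sum m_im_j|b_i-b_j|^2$, so neither $\cE_{\Wab}(\mu)\ge\cE_{\Wab}(\mi)$ nor your identification of the equality case follows from the estimates as written; moreover your kernel characterization (infinitesimal rigid motions, $(b_i-b_j)\cdot z_{ij}=0$) excludes honest rotations, which do achieve equality, so ``strict unless $\Sigma_i=0$ and $(b_i)$ in the kernel'' cannot be the right dichotomy.

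The root cause is that you Taylor-expand $\Wab$ around the unperturbed separations $z_{ij}=x_i-x_j$, which destroys exact rotation invariance at quadratic order. The paper's proof avoids this by never linearizing the barycenter contribution: it uses the radially uniform convexity $\wab(s)-\wab(1)\ge\eta(s-1)^2$ for $s$ near $1$, and then expands $|z_i-z_j|$ around the perturbed barycenter separation $|y_i-y_j|$ rather than around $1$. The barycenters then enter only through the exact nonnegative terms $(|y_i-y_j|-1)^2$, which vanish for every rotated translate; the cross terms integrate to zero because the fluctuations $v_i$ have zero mean; and every remainder is $O(r|\Delta v_{ij}|^2)$, hence absorbable into the variance gain, which is where the eigenvalue bound and the mass ratio $\rho=m_0m_1/m_n^2$ are spent. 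If you reorganize your expansion in this way, the rest of your outline (positivity of the bracket for $\bt>2$ or $\al$ above the threshold, conclusion $\Var(\nu_i)=0$ and $|y_i-y_j|=1$, hence $\mu$ is a rotated translate of $\mi$ by Remark \ref{R:standard simplex}) does go through and reproduces the theorem.
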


Since the group of rigid motions has dimension $\frac{n(n+1)}{2}$,  
this theorem provides an uncountable number of $\frac {n(n+1)}2$
\,-\,dimensional manifolds (parameterized by the positive masses $m_0 \le \ldots \le m_n$ assigned 
to each vertex of the simplex) which must be stable under the dynamics
\eqref{dynamics}.  
This both predicts and explains the dynamic formation of unit simplex configurations
observed in simulations throughout the mildly repulsive regime $\al>\bt>2$.
As in one-dimension \cite{KangKimLimSeo19p},
the intuition behind this result is that the configurations described by the theorem are critical points
due to the flatness of the interaction potential $\Wab(x)$ at the origin and at unit distance from it; 
they are stabilized by $\Wab$'s lack of uniform concavity at $x=0$  
in combination with its radially uniform convexity at $|x|=1$ and the geometry of the unit simplex. 

\begin{remark}[Limiting cases {and self-similar aggregation}]
Theorem~\ref{T:local} with $(\bt,m_0)=(2,\frac1{n+1})$ shows the configurations of Theorem \ref{T:global} 
remain $d_\infty$-local energy minimizers for all $\al > 4$ if $n \ge 2$,  and for all $\al>3$ if $n=1$.
For $n=1$,   
versions of both theorems were proved in Kang, Kim, Lim and Seo \cite{KangKimLimSeo19p}
(see also Fellner and Raoul~
\cite{FellnerRaoul10})
along with examples showing in what sense 
the bound on $\al$ required {by Theorem \ref{T:local}} 
is sharp; c.f. Remark \ref{R:1d}.  Studying aggregation with purely attractive power-law potentials,  
Sun, Uminsky and Bertozzi showed that blowing-up solutions can be transformed using
similarity variables into solutions which now appear to interact through an attractive-repulsive potential 
on the centripetal line $\bt=2$. For $n\ge 2$ they then analyze linear stability of two stationary states
for the rescaled dynamics --- (a) the uniform spherically symmetric shell, and (b) the uniform distribution over the vertices of a unit simplex --- to obtain that (a) is linearly stable precisely in the range $\al \in (2,4)$ 
and (b) in the range $\al>4$ \cite{SunUminskyBertozziJMP12}.  
At its end,  their paper raises the questions
of whether these solutions are nonlinearly stable, and whether they are global attractors.  
Theorem \ref{T:local} sheds considerable light on
both questions:  it asserts that (b) is indeed nonlinearly stable when $\al>4$,  but that it cannot be a global attractor since there are also other nonlinearly stable solutions 
(corresponding to $m_0 \ne m_n$).
\end{remark}

\begin{remark}[More general potentials]
 It is natural to expect that the techniques and results of this paper can also be extended to certain more general families of 
potentials which need neither be power-law, spherically symmetric, nor even have attractive-repulsive form globally.   In particular,
the statement of Theorem \ref{T:local} ensures that the same configurations remain $d_\infty$-local minimizers for all
potentials $W$ which agree with $W_{\al,\bt}$ in a neighbourhood in $\R^n$ of radius $2r$ around the $n^2+n+1$ displacements $\{x_i-x_j\}_{0 \le i,j \le n}$
relating vertices of the simplex;
the method of proof also shows a similar result should hold for any $C^2$-smooth radially symmetry potential $W(x)=w(|x|)$
with $w'(0)=w'(1)=0< w''(1)$ and $w''(0)$ not too negative.
Similarly,  for any $C^2$-smooth 
family of radially symmetric potentials $W^{\lambda}(x)=w^{\lambda}(|x|)$, we expect an analog of Theorem \ref{T:global} to hold for $\lambda$ sufficiently large,  if the limit $w^{\infty}$ is attained in a suitable sense
and satisfies
\begin{equation}
w^{\infty}(r) \ge 
\begin{cases}
- r^2 & {\rm if}\ r \le 1
\\ +\infty & {\rm if}\ r >1
\end{cases}
\end{equation}
with equality holding at $r=0$ and $r=1$.
\end{remark}

The paper is organized as follows. Section \ref{S:hard confinement}  recalls 
  our variational characterization of the unit simplex from \cite{LimMcCann19p} 
and  shows the same configurations uniquely
minimize the hard confinement limit $\alpha = +\infty$ of the mildly repulsive energy \eqref{problem}.
Section \ref{S:strong repulsion} introduces the notion of $\Gamma$-convergence with respect to the metrics $d_p$ on probability measures,
and contains a series of preparatory estimates for Section \ref{S:local},
which establishes the presence of $d_\infty$-local minimizers throughout the mildly repulsive triangle $\al>\bt \ge 2$
and extends the characterization of global minimizers from the hard confinement limit
to all sufficiently large values of the attraction exponent~$\alpha$.
{Key estimates of Section \ref{S:strong repulsion} are based on first variation,
whereas those of Section  \ref{S:local} are based on second variation.}
An appendix demonstrates the equivalence of the variational characterization of the unit simplex 
found in our earlier work \cite{LimMcCann19p}
to a classical result of Jung \cite{Jung01}.

\section{Minimizing mild repulsion with hard confinement}
\label{S:hard confinement}
In this section we show that on the entire halfline $\beta \ge 2$ with $\alpha=+\infty$ ---
corresponding to mild repulsion with hard confinement --- the measures which minimize the energy \eqref{problem}
are precisely those which achieve the minimum at its endpoint $(\alpha,\beta) = (\infty,2)$.
At this endpoint, minimizers are given by the 
variational characterization of the unit simplex
proved in our earlier work,
which generalizes to higher dimensions
$n>1$ of a result proven for $n=1$ by Popoviciu \cite{Popoviciu35}:

\begin{theorem}[Isodiametric variance bound and cases of equality \cite{LimMcCann19p}]\label{T:isodiametric variance bound}
If the support of a Borel probability measure $\mu$ on $\R^n$ has diameter no greater than $d$, then $\Var(\mu) \le \frac{n}{2n+2} d^2$.
Equality holds if and only if $\mu$ 
assigns mass $1/(n+1)$ to each vertex of a regular $n$-simplex having diameter $d$.
\end{theorem}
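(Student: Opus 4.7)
The plan is to reduce the statement to Jung's classical enclosing-ball inequality, then to analyze the rigidity by studying the smallest enclosing ball of $\spt \mu$. By scaling I may assume $d = 1$ and write $r_n := \sqrt{n/(2n+2)}$. Jung's theorem supplies a point $c \in \R^n$ with $\spt \mu \subset \overline{B}(c, r_n)$. Since the barycenter minimizes $p \mapsto \int |x - p|^2 \, d\mu(x)$, this yields
\[
\Var(\mu) \;=\; \int_{\R^n} |x - \bary(\mu)|^2 \, d\mu(x) \;\le\; \int_{\R^n} |x - c|^2 \, d\mu(x) \;\le\; r_n^2,
\]
which is the stated bound.

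For the rigidity, suppose $\Var(\mu) = r_n^2$. The second inequality saturates only when $|x - c| = r_n$ for $\mu$-a.e.\ $x$, so $\spt \mu \subset \partial B(c, r_n)$; the first, via the identity $\int |x - p|^2 \, d\mu = \Var(\mu) + |\bary(\mu) - p|^2$, forces $\bary(\mu) = c$. After translating I may take $c = 0$. I next argue that $\overline{B}(0, r_n)$ is the (unique) smallest enclosing ball of $\spt \mu$: any enclosing ball $\overline{B}(c', r')$ satisfies $(r')^2 \ge \int |x - c'|^2 d\mu = r_n^2 + |c'|^2 \ge r_n^2$, so the smallest enclosing radius is exactly $r_n$, and uniqueness of smallest enclosing balls pins the center at $0$. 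The classical characterization that the center of a smallest enclosing ball lies in the convex hull of its contact points then yields $0 \in \conv(\spt \mu)$, and Carath\'eodory's theorem supplies $v_0, \ldots, v_k \in \spt \mu$ with $k \le n$, strictly positive barycentric coefficients, and $0 \in \conv\{v_0, \ldots, v_k\}$. Applying Jung inside the linear subspace spanned by these vectors (of dimension $m \le k$ since $0$ lies in their affine span), the inequality $r_n \le \sqrt{m/(2m+2)}$ forces $m = k = n$ and saturates Jung; its equality case---which follows in a few lines from expanding $|\sum_i \lambda_i v_i|^2 = 0$ and invoking the Cauchy-Schwarz bound $\sum_i \lambda_i^2 \ge 1/(n+1)$---identifies $\{v_0, \ldots, v_n\}$ as vertices of a regular unit $n$-simplex.

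The remaining step, which I expect to be the most delicate, is to rule out additional points in $\spt \mu$. For any $x \in \spt \mu \cap \partial B(0, r_n)$ with $|x - v_i| \le 1$ for every $i$, the identity $|x - v_i|^2 = 2 r_n^2 - 2\, x \cdot v_i$ rewrites the diameter constraint as $x \cdot v_i \ge -1/(2n+2)$. Since $v_j \cdot v_i = -1/(2n+2)$ whenever $j \ne i$, these $n+1$ half-spaces cut out exactly the simplex $K := \conv\{v_0, \ldots, v_n\}$, and strict convexity of the closed ball gives $K \cap \partial B(0, r_n) = \{v_0, \ldots, v_n\}$. Hence $\spt \mu = \{v_0, \ldots, v_n\}$, so $\mu = \sum_i m_i \delta_{v_i}$; the affine independence of the $v_i$ converts $\sum_i m_i = 1$ and $\sum_i m_i v_i = 0$ into the unique solution $m_i = 1/(n+1)$, completing the proof.
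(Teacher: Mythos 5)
Your proposal is correct, and it follows the same high-level strategy as the paper's appendix proof: deduce the bound from Jung's theorem via the identity $\Var(\mu)+|\bary(\mu)-z|^2=\int|x-z|^2d\mu$, then analyze the equality case by showing the support must lie on the circumsphere and coincide with the vertex set of a unit simplex, after which affine independence forces the weights $1/(n+1)$. Where you genuinely diverge is in the rigidity analysis. The paper invokes the equality clause of Jung's theorem to place a unit $n$-simplex $\Delta$ inside $\spt\mu$, and then excludes extra support points through its Lemma on Reuleaux simplices ($\Delta=\Omega\cap\partial B_{r_n}(z)$), which in turn rests on the potential-theoretic Lemma \ref{L:stratified}(a). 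You instead locate the simplex by Carath\'eodory plus a direct expansion of $0=|\sum_i\lambda_i v_i|^2$ with the Cauchy--Schwarz bound on $\sum_i\lambda_i^2$ (which, as you note, simultaneously forces $k=n$, equal barycentric weights, and $|v_i-v_j|=1$ --- so your ``Jung inside the subspace'' step is actually redundant), and you exclude extra support points by the facet half-space description $x\cdot v_i\ge -\tfrac1{2n+2}$ together with strict convexity of the ball. This buys a more self-contained and elementary argument: you only use Jung's inequality (not its equality clause) and you bypass Lemmas \ref{L:Reuleaux} and \ref{L:stratified} entirely, at the cost of a longer in-line computation; the paper's route is shorter given that those lemmas are needed elsewhere anyway. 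Two small points to tighten: (i) you should record the trivial ``if'' direction, that the uniform measure on a unit simplex attains $\Var=r_n^2$ (its vertices lie at distance $r_n$ from the centroid, which is the barycenter); (ii) the inequality $r_n\le\sqrt{m/(2m+2)}$, if you keep that step, needs the one-line justification that any ball containing $\{v_0,\dots,v_k\}$ has radius at least $r_n$, since $\max_i|v_i-c''|^2\ge\sum_i\lambda_i|v_i-c''|^2=r_n^2+|c''|^2$ --- the minimality of the enclosing ball of $\spt\mu$ alone does not control enclosing balls of the subset $\{v_0,\dots,v_k\}$.
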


Another proof of this theorem and its 
relation 
to Jung's work \cite{Jung01} are discussed in
Appendix \ref{S:Jung}.
Recall that $\Prob_0(\R^n)$ denotes the set  \eqref{zero mean} of probability measures with second moments and vanishing mean.

\begin{corollary}[Mild repulsion with hard confinement is minimized only by 
unit simplices]\label{C:hard confinement}
Fix $\alpha=+\infty$ and $\beta \ge 2$.
Let $\hat \mu \in \Prob_0(\R^n)$ be a measure which equidistributes its mass over the vertices of
a unit $n$-simplex,  and fix any measure $\mu \in \Prob_0(\R^n)$ which is not a rotation of $\hat \mu$. 
 Then $\cE_{\Wib}(\mu) > \cE_{\Wib}(\hat \mu)$. Thus the minimum \eqref{problem} is uniquely achieved by
 translations and rotations of $\hat \mu$.
\end{corollary}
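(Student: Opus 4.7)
\medskip
\noindent\textbf{Proof proposal.}
The plan is to reduce the hard-confinement problem on the whole ray $\beta\ge 2$ to the corner case $(\alpha,\beta)=(\infty,2)$, where the isodiametric variance bound of Theorem \ref{T:isodiametric variance bound} already identifies the minimizers. The reduction rests on two elementary observations: finiteness of $\cE_{\Wib}$ forces a hard bound on the diameter of $\spt\mu$, and on such sets the pointwise comparison $|x-y|^\beta\le|x-y|^2$ turns higher exponents into upper bounds controlled by the variance.

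\emph{Step 1 (diameter restriction).} Because $\Wib(x)=+\infty$ for $|x|>1$, any $\mu\in\Prob_0(\R^n)$ with $\cE_{\Wib}(\mu)<\infty$ must satisfy $\diam(\spt\mu)\le 1$; otherwise a product of two small balls around a pair of support points at distance $>1$ would carry positive $\mu\otimes\mu$-mass on $\{|x-y|>1\}$ and make the integral infinite. Measures violating this bound are therefore never competitive with $\hat\mu$ and can be ignored.

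\emph{Step 2 (reduction to variance).} For every $\mu$ with $\diam(\spt\mu)\le 1$ and every $\beta\ge 2$, the inequality $|x-y|^\beta\le|x-y|^2$ holds $\mu\otimes\mu$-a.e., so
\begin{equation*}
-\cE_{\Wib}(\mu)=\frac{1}{\beta}\iint|x-y|^\beta\,d\mu(x)d\mu(y)\le \frac{1}{\beta}\iint|x-y|^2\,d\mu(x)d\mu(y)=\frac{2}{\beta}\Var(\mu),
\end{equation*}
where the last equality is the standard identity $\iint|x-y|^2\,d\mu d\mu=2\Var(\mu)$ recorded in \eqref{variance}. Applying Theorem \ref{T:isodiametric variance bound} with $d=1$ gives $\Var(\mu)\le \frac{n}{2(n+1)}$, so
\begin{equation*}
\cE_{\Wib}(\mu)\ge -\frac{n}{\beta(n+1)}.
\end{equation*}

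\emph{Step 3 (equality case).} A direct computation shows that the uniform measure $\hat\mu$ on the vertices of a unit $n$-simplex saturates both inequalities: all $n(n+1)$ ordered pairs of distinct vertices sit exactly at distance $1$, so $\iint|x-y|^\beta\,d\hat\mu d\hat\mu=\frac{n}{n+1}=2\Var(\hat\mu)$, and hence $\cE_{\Wib}(\hat\mu)=-\frac{n}{\beta(n+1)}$. Thus $\hat\mu$ is a minimizer, and for any other minimizer $\mu$ the chain in Step 2 must be saturated; in particular $\Var(\mu)=\frac{n}{2(n+1)}$, which by the equality clause of Theorem \ref{T:isodiametric variance bound} forces $\mu$ to distribute mass $1/(n+1)$ over the vertices of some unit $n$-simplex. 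The constraint $\mu\in\Prob_0(\R^n)$ then pins the centroid to the origin, so $\mu$ is a rotation of $\hat\mu$, proving the strict inequality claimed in the corollary.

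The main obstacle in this argument has already been isolated and handled in the companion paper, namely the equality case of the isodiametric variance bound. Given that tool, the present deduction is essentially two lines of monotonicity and one matching computation, and the ray $\beta\ge 2$ is handled uniformly with no further case analysis.
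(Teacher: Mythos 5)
Your proposal is correct and takes essentially the same route as the paper: both reduce to the corner case via the pointwise comparison $|x-y|^\beta\le|x-y|^2$ on unit-diameter supports and then invoke Theorem \ref{T:isodiametric variance bound} together with its equality clause, your version merely packaging the strictness as an equality-case analysis of the explicit bound $-\tfrac{n}{\beta(n+1)}$ rather than chaining a strict inequality through $\cE_{\Wit}$. The only point left implicit is the corollary's final sentence about the infimum over all of $\Prob(\R^n)$, which follows exactly as in the paper from rigid-motion invariance and the fact that finite energy forces $\diam(\spt\mu)\le 1$, so one may recenter to $\Prob_0(\R^n)$.
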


\noin{\bf Proof:}
Fix any measure $\mu \in \Prob_0(\R^n)$ 
which is not a rotation of $\hat \mu$, and assume 
 $\diam[\spt \mu] \le 1$, since otherwise $\cE_{\Wib}(\mu) = +\infty$ and the inequality holds trivially.
Since  $\beta \ge 2$ and $|x| \le 1$ imply $\beta\Wib(x) \ge 2\Wit (x)$ and equality holds when $|x|=1$,
the uniqueness claim of Theorem \ref{T:isodiametric variance bound}
 asserts 
 \ba
\beta \cE_{\Wib}(\mu)
& \ge  2 \cE_{\Wit}(\mu)  \nn
\\ &> 2 \cE_{\Wit}(\hat \mu)  \nn
\\ &= \beta \cE_{\Wib}(\hat \mu).   \nn
\end{align}
Since $\cE_{\Wab}$ is invariant under rigid motions and its minimizers have bounded diameter \cite{CarrilloFigalliPatacchini17}
(or see \eqref{diameter bound} below),
this shows only $\hat \mu$ and its translations and rotations attain the infimum
 \eqref{problem}.
  \QED

\section{Minimizing mild repulsion with strong attraction}
\label{S:strong repulsion}

We now turn to the question of extending this characterization of energy minimizers to the large finite
values of the attraction exponent $\alpha$ in the mildly repulsive triangle $\alpha \ge \beta \ge 2$.
Recall minimizers $\mu_{\al,\beta}$ of \eqref{problem} are known to exist \cite{ChoksiFetecauTopaloglu15} 
and to satisfy the Euler-Lagrange equation 
\begin{equation}\label{Euler-Lagrange}
\mu * \Wab(x) \ge  \cE_{\Wab}(\mu), \q {\rm with\ equality\ holding}\ \mu-a.e.
\end{equation}
where 
\begin{equation}\label{convolve}
(\mu * W)(x) := \int_\Rn  W(x-y) d\mu(y),
\end{equation}
see e.g. \cite{McCann06}\cite{BalagueCarrilloLaurentRaoul13} or Lemma 2.3 of \cite{CarrilloFigalliPatacchini17};
our normalization $\frac{\delta \cE_{\Wab}}{\delta \mu} = 2 \mu*W$ differs from theirs by a factor of two.
It is not hard to extend this to the hard confinement case $\alpha=+\infty$.
Setting
\ba\label{Mi}
\Miw := \argmin_{\Prob(\Rn)} \cE_{\Wib} \q \text{and} \q
\Mi := \Miw \cap \Prob_0(\Rn),
\end{align}
our strategy is to show a $\Gamma$-convergence result for the $\al \to +\infty$ limit, which implies as in \cite{Braides02} 
that  any sequence of centered minimizers $\mab \in \Prob_0(\Rn)$ 
must approach the $n(n-1)/2$ dimensional manifold $\Mi$
of minimizers for the limiting problem identified in Corollary \ref{C:hard confinement}.
Proposition~\ref{P:convergence of potentials} shows in what sense 
the associated potentials $V_{\al,\beta}:= \mab* \Wab$ converge subsequentially to some $\Vib$. This combines with
the Euler-Lagrange equation \eqref{Euler-Lagrange} to imply all of the mass of $\mab$ must
eventually lie in a small neighbourhood of $\spt \mi_\alpha$ for some $\mi_\alpha \in \Mi$ as a corollary. 

To verify convergence of minimizers to minimizers,  we show the strong attraction problems 
$\Gamma$-converge to the hard confinement problem as $\alpha \to +\infty$ in the 
Kantorovich-Rubinstein-Wassestein metric $d_2$ from optimal transportation \cite{Villani03}.
Recall:

\begin{definition}[$\Gamma$-convergence] 
A sequence $F_i:M \longrightarrow \overline\R$ on a metric space $(M,d)$
 is said to {\em $\Gamma$-converge} to $F_\infty:M \longrightarrow \overline\R$ if (a)
\be\label{Gamma-lsc}
F_\infty(\mu) \le \mathop{\lim\inf}\limits_{i\to \infty} F_i(\mu_i) 
\q {\rm whenever} \q d(\mu_i,\mu) \to 0,
\end{equation}
and (b) each $\mu \in M$ is the limit of a sequence $(\mu_i)_{i} \subset M$ along which
\be\label{Gamma-construction}
F_\infty(\mu) \ge \mathop{\lim\sup}\limits_{i\to \infty} {F_i(\mu_i).  }
\end{equation}

\end{definition}

The main virtue for us of this concept is that it implies $\argmin_M F_i$ cannot have accumulation points as $i\to \infty$ outside of $\argmin_M F_\infty$
\cite{Braides02}.   

For $1 \le p < +\infty$ let
 \[
\Prob_p(\Rn) := \{ \mu \in \Prob(\Rn) \mid \int_{\Rn} |x|^p d\mu(x) <\infty \}
\]
denote the {probability} measures with finite $p$-th moments; let 
$\Prob_\infty(\Rn)$ denote the {probability measures} with bounded support. 
For $\mu,\nu \in \Prob_p(\Rn)$
 define the Kantorovich-Rubinstein-Wasserstein metric 
\be\label{KRW metric}
d_p(\mu,\nu) 
:=\inf_{X \sim \mu, Y \sim \nu} \| X - Y\|_{L^p},
\end{equation}
where the infimum is taken over arbitrary couplings of random variables $X$ and $Y$
whose laws are given by $\mu$ and $\nu$ respectively.  For $p\ne \infty$ the distance $d_p$ is well-known to metrize
narrow convergence (against continuous bounded test functions)
together with convergence of $p$-th moments on $\Prob_p(\Rn)$, e.g.~Theorem 7.12 of \cite{Villani03}.
Fixing $p=2$ hereafter, we endow $\Prob_0(\R^n) \subset \Prob_2(\Rn)$ with the metric $d_2$.

\begin{lemma}[$\Gamma$-convergence to hard confinement]\label{L:Gamma convergence}
{Let} $\al > \beta \ge 2$. The functionals $\cE_{\Wab}$ {$\Gamma$-converge} to $\cE_{\Wib}$ on $(\Prob_2(\Rn),d_2)$ as $\al \to \infty$.
\end{lemma}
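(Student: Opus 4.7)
The plan is to verify the two inequalities \eqref{Gamma-lsc} and \eqref{Gamma-construction} separately. For the easier construction \eqref{Gamma-construction}, I would take the constant recovery sequence $\mu_i := \mu$. If $\diam[\spt\mu]>1$, then $\cE_{\Wib}(\mu)=+\infty$ and the bound is vacuous. Otherwise $|x-y|\le 1$ on $\spt\mu\times\spt\mu$, so $\Wa(x-y)\le 1/\al$ there; since $\Wb$ is independent of $\al$ and $\Wib(x-y)=-\Wb(x-y)$ on $\spt\mu\times\spt\mu$, dominated convergence gives $\cE_{\Wab}(\mu)\to -\cE_{\Wb}(\mu)=\cE_{\Wib}(\mu)$.

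The lower-semicontinuity condition \eqref{Gamma-lsc} is more delicate. Suppose $d_2(\mu_i,\mu)\to 0$ with $\al_i\to\infty$, and pass to a subsequence along which $L:=\lim \cE_{W_{\al_i,\bt}}(\mu_i)<\infty$ (else the bound is trivial). First I would show $\diam[\spt\mu]\le 1$: if instead $\mu(A),\mu(B)>0$ for some open sets $A,B\subset\Rn$ with $|x-y|\ge 1+\delta$ on $A\times B$, then $d_2$-convergence (hence narrow convergence of $\mu_i\otimes\mu_i$ to $\mu\otimes\mu$) gives $(\mu_i\otimes\mu_i)(A\times B)\ge c>0$ eventually, while the inequality $\Wab(x-y)\ge \Wa(x-y)/2\ge (1+\delta)^{\al_i}/(2\al_i)$ on $A\times B$ (valid once $(1+\delta)^{\al_i-\bt}\ge 2\al_i/\bt$) combined with the universal lower bound $\Wab\ge -1/\bt$ forces $\cE_{W_{\al_i,\bt}}(\mu_i)\to+\infty$, contradicting $L<\infty$.

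Granted this, $\cE_{\Wib}(\mu)=-\cE_{\Wb}(\mu)$ and $\cE_{\Wab}(\mu_i)\ge -\cE_{\Wb}(\mu_i)$ because $\cE_{\Wa}\ge 0$, so it suffices to prove $\limsup \cE_{\Wb}(\mu_i)\le \cE_{\Wb}(\mu)$. Here two energy-budget estimates combine. On one hand, the bound from the previous step applied on $\{|x-y|\ge 1+\ep\}$ shows $(\mu_i\otimes\mu_i)\{|x-y|\ge 1+\ep\}\le C\al_i(1+\ep)^{-\al_i}\to 0$ for every $\ep>0$. On the other hand, for fixed $T>1$ the ratio $\Wa(z)/\Wb(z)=|z|^{\al_i-\bt}\bt/\al_i$ grows arbitrarily large on $\{|z|\ge T\}$, so $\Wab\ge M\Wb$ there for any preassigned $M$ once $\al_i$ is large; together with the uniform bound $\cE_{\Wab}(\mu_i)\le L+1$, this forces $\iint_{|x-y|\ge T}\Wb\,d(\mu_i\otimes\mu_i)\to 0$. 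On the complementary region $\{|x-y|\le T\}$, which contains $\spt(\mu\otimes\mu)$ since $T>1\ge \diam\spt\mu$, narrow convergence applied to a continuous truncation of $\Wb$ compactly supported in $\{|x-y|<T+1\}$ gives $\iint\Wb\,d(\mu_i\otimes\mu_i)\to \cE_{\Wb}(\mu)$ after accounting for the vanishing contribution just computed. Adding the two regions finishes the proof.

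The principal obstacle is that $d_2$-convergence does not generally transmit convergence of $\bt$-th moments when $\bt>2$, so the tails of $\mu_i$ must be controlled directly. The key insight is that the energy bound, combined with the dominance of $\Wa$ over $\Wb$ beyond unit distance for large $\al$, quantitatively forces $\mu_i\otimes\mu_i$ to concentrate near the diagonal within unit distance, thereby controlling $\iint|x-y|^{\bt}\,d(\mu_i\otimes\mu_i)$ uniformly and enabling the reverse Fatou estimate needed for $\cE_{\Wb}$.
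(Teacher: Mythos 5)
Your proof is correct, and its overall architecture matches the paper's: a constant recovery sequence for \eqref{Gamma-construction}, and for \eqref{Gamma-lsc} the observation that the finite-energy constraint together with the growing dominance of $\Wa$ over $\Wb$ beyond unit distance kills the long-range repulsive energy, while narrow convergence (via a bounded continuous truncation of $\Wb$) handles the short-range part. The route through the liminf inequality differs in one technical respect: the paper first bounds the full attractive energy $\cE_{\Wa}(\mu_{\al_i})$ via Jensen's inequality (its estimate \eqref{moment bound}, which requires a compact-support approximation), then splits $W$ into short- and long-range parts as in \eqref{srlr} and compares $W^>_\bt$ to $W^>_\al$ to obtain the vanishing of the long-range $\bt$-energy \eqref{vanishing moments}; you bypass the Jensen step entirely by comparing $\Wab$ to $\Wb$ directly on $\{|z|\ge T\}$, $T>1$, where $\Wab\ge M\Wb$ for any prescribed $M$ once $\al$ is large, combined with the global bound $\Wab\ge-1/\bt$, and you establish $\diam[\spt\mu]\le 1$ by an explicit separation-and-blowup contradiction rather than deducing it from \eqref{vanishing moments}. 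This makes your liminf step somewhat more elementary and self-contained (no approximation by compactly supported measures is needed), at the cost of losing the quantitative bound $C\le\tilde C(\al,\bt,L)$ on $\cE_{\Wa}$, which the paper does not use elsewhere anyway. Two cosmetic points: your estimate (a) on $(\mu_i\otimes\mu_i)\{|x-y|\ge 1+\ep\}$ is not actually needed once you have the tail estimate beyond $T$ and the truncation argument; and the cutoff $\phi$ is compactly supported only as a function of $z=x-y$, not on $\Rn\times\Rn$ --- what matters, and what you in fact use, is that it is bounded and continuous.
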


\noin
{\bf Proof:}
The construction step \eqref{Gamma-construction} is straightforward: assume $\mu \in \Prob_2(\Rn)$ has $\diam[\spt\mu] \le 1$
since otherwise there is nothing to prove,  and set $\mu_\al := \mu$ for all $\al$.  Since $\Wab$ converges uniformly to $\Wib$ on $|x| \le 1$,
it follows that $\cE_{\Wab}(\mu) \to \cE_{\Wib}(\mu)$ as desired.

To show the `lower semicontinuity' part \eqref{Gamma-lsc} of $\Gamma$-convergence,  suppose 
$$0 =\lim_{\alpha \to \infty }d_2(\mu_\alpha,\mu_\infty) \q{\rm and}\q  
L :=\mathop{\lim\inf}\limits_{\al \to \infty} \cE_{\Wab}(\mu_\al)<+\infty
$$
since otherwise there is nothing to prove.  Choosing a subsequence $\al_i$ along which $\cE_{W_{\al_i,\bt}}(\mu_{\al_i}) \to L$,
we claim
\begin{equation}\label{moment bound}
C :=\mathop{\lim\sup}\limits_{i \to \infty}  \cE_{W_{\al_i}}(\mu_{\al_i})<+\infty.
\end{equation}
We assume the $\mu_{\al_i}$ have compact support  (uniformly in $i$) without loss of generality,  since the general case follows by approximation; i.e.~applying 
the estimate from the remainder of the present paragraph to the normalized restrictions of $\{\mu_{\al_i}\}_i$ to a large ball $B_R(0)$,  and then passing
to the limit $R \to \infty$.  Now for $\alpha > \beta \ge 2$ Jensen's inequality yields
$$
(\beta \cE_{\Wb}(\mu))^{1/\beta} \le (\alpha \cE_{\Wa}(\mu))^{1/\alpha},
$$
whence
$$
\cE_{\Wab} \ge \cE_{\Wa} - \frac1\beta(\al \cE_{\Wa})^{\beta/\al}.
$$
Since $\beta/\al<1$ this implies the desired bound \eqref{moment bound} follows from our hypothesis $\cE_{W_{\al_i,\bt}}(\mu_{\al_i}) \to L$;
in fact $C \le \tilde C$, where $\tilde C=\tilde C(\al,\bt,L)$ is the unique positive number satisfying $L= \tilde C  - (\al  \tilde C)^{\bt/\al}/\bt$.

Having established \eqref{moment bound} (even for sequences of measures with noncompact support),
split $W= W^\le + W^>$ into a short-range and long-range part using
\be\label{srlr}
W^\le(x) := \left \{ 
\begin{array}{cc} 
W(x) &{\rm if} |x| \le 1,
\\  W(\hat e_1) & {\rm else,}
\end{array}\right.
\end{equation}
so that both parts are continuous and $W^\le$ is bounded.
Since $|\Wa^\le| \le 1/\al \to 0$ as $\al \to \infty$, 
we obtain 
\be \mathop{\lim\sup}\limits_{i \to \infty} \cE_{W^>_{\al_i}}(\mu_{\al_i}) = C <\infty
\end{equation}
from  \eqref{moment bound}.
Since $W^>_\beta(x)/W^>_{\al}(x) \to 0$ on $|x|>1$  as $\alpha \to \infty$, 
\be\label{vanishing moments}
\mathop{\lim\sup}\limits_{i \to \infty} \cE_{W^>_{\beta}}(\mu_{\al_i}) = 0 
\end{equation}
follows.
Thus $\diam[\spt \mu_\infty] \le 1$ and \eqref{vanishing moments} also implies
\begin{align*}
\cE_{\Wib}(\mu_\infty) 
&= \cE_{-\Wb}(\mu_\infty)
\\ &= \lim_{i \to \infty} \cE_{-\Wb}(\mu_{\al_i})
\\ &\le L
\end{align*}
as desired.
\QED
\\

Let $\wab(r) := r^\alpha/\alpha - r^\beta/\beta$ be the potential on $\R_+$ for which $\Wab(x) = \wab(|x|)$.
Let $\Rab = (\frac\alpha\beta)^{\frac1{\alpha-\beta}}$ be the unique $R >0$ for which $\wab(R)=0$, and note 
$\Rab \searrow 1$ as $\al \to \infty$.  A second variation calculation by the trio yields 
the following diameter bound, Lemma 2.6 of \cite{CarrilloFigalliPatacchini17}:
\begin{equation}\label{diameter bound}
\diam[\spt \mu] \le \Rab \q {\rm if} \q \mu \in \argmin_{\Prob(\Rn)} \cE_{\Wab}.
\end{equation}

\begin{corollary}[Narrow convergence of minimizers to unit simplices]\label{C:minimizers to minimizers}
Fix $\beta \ge 2$.  Given $\epsilon>0$, taking $\alpha$ sufficiently large ensures that each
$\displaystyle
 \mab \in \argmin_{\Prob_0(\Rn)} \cE_{\Wab}$ 
satisfies $d_2(\mab, \Mi) <\epsilon$ where $\Mi$ is from \eqref{Mi}. 
\end{corollary}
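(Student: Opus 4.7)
The corollary is a textbook ``minimizers of $F_i$ accumulate on minimizers of $F_\infty$'' consequence of the $\Gamma$-convergence proved in Lemma \ref{L:Gamma convergence}, provided one establishes precompactness of $\{\mab\}_\al$ as $\al \to \infty$. I argue by contradiction. Supposing the conclusion fails, fix $\epsilon_0>0$ and extract a sequence $\al_i \to \infty$ together with minimizers $\mu_i := \mu_{\al_i,\bt} \in \argmin_{\Prob_0(\R^n)} \cE_{W_{\al_i,\bt}}$ satisfying $d_2(\mu_i, \Mi) \ge \epsilon_0$ for every $i$.

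For precompactness, I invoke the diameter bound \eqref{diameter bound}: translation invariance shows every $\mu_i$ is a minimizer on the whole $\Prob(\R^n)$, whence $\diam[\spt \mu_i] \le R_{\al_i,\bt} \searrow 1$. Combined with the zero-barycenter constraint $\bary(\mu_i)=0$, this places all the $\spt \mu_i$ inside a fixed compact ball $\overline B_{R_0}(0)$ for large $i$. Prokhorov's theorem then extracts a subsequence (still denoted $\mu_i$) converging narrowly to some $\mu_\infty$ supported in $\overline B_{R_0}(0)$; uniform boundedness of supports upgrades narrow convergence to $d_2$-convergence and transfers the vanishing-mean property, so $\mu_\infty \in \Prob_0(\R^n)$.

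To identify $\mu_\infty$, fix any $\mi \in \Mi$; its support lies in the closed unit ball, on which $W_{\al,\bt} \to W_{\infty,\bt}$ uniformly as $\al \to \infty$, hence $\cE_{W_{\al_i,\bt}}(\mi) \to \cE_{\Wib}(\mi)$. Combining the minimality bound $\cE_{W_{\al_i,\bt}}(\mu_i) \le \cE_{W_{\al_i,\bt}}(\mi)$ with the liminf inequality \eqref{Gamma-lsc} of Lemma \ref{L:Gamma convergence} yields
\begin{align*}
\cE_{\Wib}(\mu_\infty)
\le \liminf_{i \to \infty} \cE_{W_{\al_i,\bt}}(\mu_i)
\le \lim_{i \to \infty} \cE_{W_{\al_i,\bt}}(\mi)
= \cE_{\Wib}(\mi),
\end{align*}
so $\mu_\infty$ minimizes $\cE_{\Wib}$. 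Corollary \ref{C:hard confinement} then forces $\mu_\infty \in \Mi$ (since $\mu_\infty$ has zero mean), and $d_2(\mu_i,\Mi) \le d_2(\mu_i,\mu_\infty) \to 0$ contradicts $d_2(\mu_i,\Mi) \ge \epsilon_0$.

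The genuinely hard analytic work is already packaged inside Lemma \ref{L:Gamma convergence}, whose liminf bound requires the Jensen trick and the short-range/long-range decomposition \eqref{srlr} to play the vanishing attractive tail off against the growing diameter constraint. What remains for this corollary is essentially bookkeeping; the only two points worth careful verification are that narrow convergence with uniformly bounded supports implies $d_2$-convergence (test against a bounded continuous cut-off of $|x|^2$) and that $\bary(\mu_\infty)=0$ (test against a bounded continuous cut-off of each coordinate function).
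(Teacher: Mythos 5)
Your argument is correct and is essentially the paper's proof: the paper likewise combines the diameter bound \eqref{diameter bound}, $d_2$-compactness of centered measures with uniformly bounded support, and the $\Gamma$-convergence of Lemma \ref{L:Gamma convergence}, simply citing the standard ``minimizers converge to minimizers'' theorem (Theorem 1.21 of \cite{Braides02}) where you write out the extraction, liminf/recovery comparison against a competitor $\mi \in \Mi$, and identification via Corollary \ref{C:hard confinement} by hand. No substantive difference in route, only in the level of detail.
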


\noin
{\bf Proof:} 
The set of measures $\mu \in \Prob_0(\Rn)$ satisfying the diameter bound $\diam[\spt \mu] \le R_{ \beta,\beta} := \lim_{\al \searrow \bt} R_{\al,\bt} < \infty$
and with barycenter at the origin is well-known to be $d_2$-compact, e.g.~\cite{Villani03}.   Since $\al >\bt$ implies $\Rab \le R_{\bt,\bt}$, the
corollary becomes a standard consequence of
the $\Gamma$-convergence shown in Lemma \ref{L:Gamma convergence} and the diameter bound \eqref{diameter bound}
as in Theorem~1.21 of \cite{Braides02}.   
\QED
\\

This corollary implies that for $\al$ large enough,  most of the mass of a minimizer $\mab$ lies near the vertices of a unit simplex
(and is approximately equidistributed amongst the $n+1$ vertices).  In view of the Euler-Lagrange condition \eqref{Euler-Lagrange}
the next proposition and its corollary improve this statement to assert that all of the mass of $\mab$ lies near the vertices of a unit simplex.
They rely on the following lemma concerning the potentials of the conjectured optimizers on the 
higher dimensional generalization $\Omega\subset \Rn$ of Reuleaux's triangle:

\begin{lemma}[Unit simplex potentials are minimized only at vertices]\label{L:stratified}
Fix $\beta \ge 2$.
Let $X=\{x_0,x_1,...,x_n\}$ be the set of vertices of a unit $n$-simplex $\Delta^n \subset \R^n$, 
and 
$\Omega := \bigcap_{i=0}^n \overline{B_1({x_i})}$. Define $V: \Omega \subset \R^n \to \R$ by 
\be\label{V2}
 V(x)=-  \sum_{i=0}^n |x-x_i|^\beta.
\end{equation}
Then (a) $X = \argmin_\Omega V$ and (b) when $\beta=2$ then $V$ has no local minima outside $X$.\\
\end{lemma}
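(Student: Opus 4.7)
My plan is to handle (a) by reducing to the case $\beta = 2$ and then proving a sharp quadratic bound, and (b) by exploiting the strict concavity of $V_2$. For (a), since $|x - x_i| \le 1$ on $\Omega$ and $t^\beta \le t^2$ for $t \in [0,1]$ when $\beta \ge 2$, we have $V(x) \ge V_2(x) := -\sum_i |x - x_i|^2$ on $\Omega$, with equality iff each $|x - x_i| \in \{0,1\}$. Since both functions equal $-n$ at every vertex, (a) reduces to the claim: $V_2(x) \ge -n$ on $\Omega$ with equality only on $X$. Writing $\bary := \frac{1}{n+1}\sum_i x_i$ for the centroid of $X$ and $u_i := x_i - \bary$, the identities $|u_i|^2 = n/(2(n+1)) =: R^2$ and $u_i \cdot u_j = -1/(2(n+1))$ for $i \ne j$ give $V_2(x) = -(n+1)|x - \bary|^2 - n/2$, so the claim is equivalent to the sharp bound $|x - \bary|^2 \le R^2$ on $\Omega$, with equality iff $x \in X$.

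For this bound, I would write $y := x - \bary = \sum_i c_i u_i$ under the unique normalization $\sum_i c_i = 0$ (since $\sum_i u_i = 0$ is the only linear relation among the $u_i$). A Gram-matrix computation yields $y \cdot u_j = c_j/2$ and $|y|^2 = \tfrac{1}{2}\sum c_i^2$, and each constraint $|y - u_i|^2 \le 1$ becomes the scalar bound $c_i \ge |y|^2 + R^2 - 1$. Setting $-p := \min_i c_i$ (so $p \ge 0$ since $\sum c_i = 0$), I would combine (i) the bound $\sum c_i^2 \le n(n+1) p^2$, obtained by maximizing the convex function $\sum c_i^2$ on $\{c : c_i \ge -p,\; \sum c_i = 0\}$ at the unbalanced vertex (one coordinate $np$, the remaining $n$ equal to $-p$), with (ii) the bound $p \le 1 - R^2 - |y|^2$; optimizing over $p$ forces $p = 1/(n+1)$ and $|y|^2 \le R^2$. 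Equality requires both the unbalanced $c$-configuration and $p = 1/(n+1)$, which together give $y = u_{j_0}$ for some $j_0$, i.e., $x = x_{j_0} \in X$.

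For (b), where $V = V_2$ is strictly concave with Hessian $-2(n+1)I$, I would rule out local minima in $\Omega \setminus X$ by producing a decreasing admissible direction at each such point. Interior points are not local minima since the only critical point of $V_2$ is the global maximum $\bary$. On the boundary, I would first observe that any $x^* \in \partial\Omega \setminus X$ has $k < n$ active sphere constraints: the only points in $\Omega$ satisfying $n$ equalities $|x - x_i| = 1$ are the vertices themselves, as the other algebraic solution to such a system is the reflection of the missing vertex across the opposing hyperplane, at distance $\sqrt{2(n+1)/n} > 1$ from that vertex. With $k < n$, the tangent space to $\partial\Omega$ at $x^*$ has positive dimension; for a non-zero tangent $d$ I would construct a second-order admissible path $x(\epsilon) = x^* + \epsilon d + \epsilon^2 w$ with an inward correction satisfying $w \cdot (x^* - x_{i_s}) \le -|d|^2/2$ for each active $i_s$. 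The Taylor expansion of $V_2$ along this path yields a strict decrease --- either from the first-order term $\epsilon\,\nabla V_2 \cdot d$ when it is non-zero, or from the $\epsilon^2$-coefficient $\nabla V_2 \cdot w - (n+1)|d|^2$ in the tangential case, where the Lagrange multipliers at the KKT point satisfy $\sum_s \mu_s < n+1$ (forced by the explicit Gram system for the unit vectors $\nu_s = x^* - x_{i_s}$, whose off-diagonal entries turn out to be the constant $1/2$). At a vertex $x_j$, summing the feasibility inequalities $d \cdot (x_j - x_i) \le 0$ over $i \ne j$ and using $\sum_{i \ne j}(x_j - x_i) = (n+1) u_j$ gives $d \cdot u_j \le 0$ with equality only at $d = 0$, so $\nabla V_2(x_j) \cdot d = -2(n+1)\,u_j \cdot d > 0$ on all non-zero feasible directions, confirming $x_j$ as a strict local minimum.

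The main obstacle I anticipate is the sharp bound $|x - \bary|^2 \le R^2$ in (a): naive summation of the $n+1$ constraints $|x - x_i|^2 \le 1$ gives only $|x - \bary|^2 \le (n+2)/(2(n+1))$, overshooting the true bound by a factor $(n+2)/n$. The $c$-coordinate argument closes this gap by using the regular-simplex Gram matrix to translate each spherical constraint into a scalar lower bound on a single coefficient, then exploiting the fact that simultaneous near-saturation of these bounds is possible only at highly unbalanced configurations coinciding precisely with $y$ equaling a single vertex vector $u_{j_0}$.
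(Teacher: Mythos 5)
Your proposal is correct, but it takes a genuinely different route from the paper's. The paper proves (b) first, by restricting $V$ to the spherical strata of $\partial\Omega$: it identifies the intersection of the $k$ active unit spheres as a round sphere inside the orthogonal complement of the corresponding face and shows $t\mapsto V(\gamma(t))$ is strictly concave along geodesics of that sphere; (a) then follows by compactness together with the comparison $|x-x_i|^\beta\le|x-x_i|^2$. You instead prove (a) directly for all $\beta\ge 2$: after the same comparison, you reduce to the sharp inclusion $\Omega\subset\overline{B_{r_n}(\bar x)}$ with equality only at vertices, established via barycentric coordinates $y=\sum_i c_iu_i$, the regular-simplex Gram identities, and maximizing $\sum c_i^2$ over $\{c_i\ge -p,\ \sum c_i=0\}$; this is in effect a direct proof of the paper's Reuleaux-simplex Lemma \ref{L:Reuleaux}, which the paper instead derives \emph{from} the present lemma, so your (a) is independent of (b) and quantitative. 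For (b) you replace geodesic concavity by a second-order feasible-curve argument; this works: the active normals $\nu_s$ are linearly independent (Gram matrix $\frac12(I+J)$), your reflection argument correctly gives $k\le n-1$ off $X$, and pairing $(n+1)(x^*-\bar x)=\sum_s\mu_s\nu_s$ with each $\nu_{s'}$ forces equal multipliers and $\sum_s\mu_s=\frac{k}{k+1}(n+1)\bigl(|x^*-\bar x|^2+1-r_n^2\bigr)<n+1$ by your bound from (a) — note only $\sum_s\mu_s$ enters the $\epsilon^2$-coefficient, so no sign information is needed. Two details to write out: take $\nu_s\cdot w\le -|d|^2/2-\delta$ strictly so the cubic terms in $|x(\epsilon)-x_{i_s}|^2$ cannot exit $\Omega$, and justify the existence of such $w$ (immediate from linear independence of the $\nu_s$). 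The paper's argument avoids multiplier bookkeeping; yours buys a self-contained proof of the enclosing-ball bound and of the appendix lemma as a byproduct.
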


\noin{\bf Proof.} (b) Assume $\beta=2$. It is clear that $V$ is strictly concave in ${\rm int} (\Omega)$ so has no local minima there. 
Like the boundary of the simplex $\Delta^n$, which is a stratified space whose strata consist of the relative interiors of unit simplices of 
all lower dimensions,   the boundary of $\Omega$ is a stratified space whose strata consist of open pieces of round spheres of different radii and dimension;
in both cases the zero dimensional strata coincide with the vertices $X$ of $\Delta^n$.
The strategy of our proof is to show strict geodesic concavity of the restriction of $V$ to each of the strata of $\p \Omega$,  which ensures that $V$ cannot
admit local minima except at the zero-dimensional strata.

Given $x^* \in \p \Omega \setminus X$, we will show $V$ cannot attain a local minimum at $x^*$.
By rearranging the indices if necessary, there is $k \in \{1,2,...,n-1\}$ such that
\ba
 |x^*-x_i| =1 & \ \text{for } i  = 0,1,...,k-1, 
 \\ 0 < |x^*-x_i| < 1 &  \ \text{for } i  =k,k+1,...,n. 
 \label{nonbinding vertices}
 \end{align}
Recall that the simplex  $\Delta^{k-1}:=\conv \{x_0,...,x_{k-1}\}$ has radius $r_{k-1} := \sqrt{\frac{k-1}{2k}}$;
take the origin to be its center $\frac1k \sum_{i=0}^{k-1} x_i$ without loss of generality.
We claim the intersection of spheres
\be\label{stratify}
S:=\{x \in \R^n \ | \ |x-x_i|=1\ \ \text{for } i=0,1,...,k-1\}.  
\end{equation}
lies in the subspace of $\R^n$ orthogonal to $\Delta^{k-1}$,
and is in fact the intersection of this subspace $\Sigma := [\Delta^{k-1}]^\perp$ with the sphere of radius $R =  \sqrt{\frac{k+1}{2k}}$ centered at the origin.

Let us establish this claim before completing the proof of the lemma.  
For each $0 \le i< j \le k-1$, the pairwise intersection 
\[
|x - x_i|=1=|x-x_j| 
\]
lies in the hyperplane through the origin orthogonal to $x_i-x_j$; this implies $S \subset \Sigma$.
At each point $x \in S$,  it follows that the vectors $\{x - x_i\}_{i=0}^{k-1}$ are linearly independent.
The implicit function theorem then shows $S$ to be a manifold of dimension $n-k$.  (It cannot be empty since $x^* \in S$.)
For $x \in S$, Pythagoras
yields
$$
|x-0|^2 = |x-x_0|^2 - |0-x_0|^2 = 1 - r_{k-1}^2 = \frac{k+1}{2k} ={R^2  }
$$
whence $S \subset \Sigma \cap \p B_R(0)$.  Since both compact manifolds have the same dimension and the larger of the two is connected,
this inclusion becomes an equality and establishes the claim.

Now $S$ is a round $n-k$ dimensional sphere containing $x^*, x_k,...,x_n$.
Moreover,  \eqref{nonbinding vertices} shows $x^*$ lies in the relative interior of the $n-k$ dimensional manifold-with-boundary $S \cap \p \Omega$.
Choose any constant-speed geodesic curve $\gamma(t)$ valued in $S$  with $\gamma(0)=x^*$, and let $j \in \{k,...,n\}$.  We find
\begin{align*} \frac{d^2}{dt^2}\bigg{|}_{t=0} 
|\gamma(t)-x_j|^2 
&= - 2\frac{d^2}{dt^2}\bigg{|}_{t=0} x_j \cdot \gamma(t) 
\\ &= -2 x_j \cdot \gamma''(0) 
\\ &> 0,
\end{align*}
where the inequality follows  from the facts (i) that $-\gamma''(0)$ is a positive multiple of $x^*$, hence is a linear combination 
 with positive coefficients of $\{ x_k,\ldots, x_n\}$
and (ii) 
$x_i \cdot x_j > 0$ for all $i =k,\ldots, n$ (which follows from the fact that $\langle \hat e_i - c, \hat e_j -c \rangle = \frac1{k}$ 
for the standard simplex in $\R^{n+1}$ using $c=(\frac1k,\ldots,\frac1k,0,\ldots,0)$ in place of the origin).
 When $\beta =2$ this shows the function $t \mapsto V(\gamma(t))$ is strictly concave around $t=0$, 
hence $V$ cannot attain a local minimum at $x^*$, thus proving (b).\\
 
 \noin(a) Now suppose $\beta > 2$.  For each $x \in \Omega$ and $x_i \in X$ we have $|x -x_i|^\beta \le |x - x_i|^2$,
 and the inequality is strict unless $|x-x_i| \in \{0,1\}$.   Thus 
 \[
 V(x) \ge - 
 \sum_{i=0}^n |x-x_i|^2
  \]
 and the inequality is {strict} unless  $x \in X$, where Remark \ref{R:standard simplex} has been used.
Part (b) implies that this lower bound is minimized precisely on~$X$,
hence the same conclusion follows for $V$.
  \QED
 
\begin{proposition}[Convergence of potentials]\label{P:convergence of potentials}
Fix 
$\beta \ge 2$.  Given $r>0$, taking $\alpha$ sufficiently large ensures for each
$\displaystyle
 \mab \in \argmin_{\Prob_0(\Rn)} \cE_{\Wab}
 $ 
there exists $\mi \in \Mi$ such that all minima of $\Vab:=\mab * \Wab$ lie within distance $r$ of $\spt \hat\mu$.
Moreover, $d_2(\mab,\mi) = d_2(\mab,\Mi)$.
\end{proposition}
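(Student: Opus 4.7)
My approach is a compactness--contradiction argument built on the $\Gamma$-convergence from Lemma~\ref{L:Gamma convergence} together with the hard-confinement characterization of potentials in Lemma~\ref{L:stratified}. For the ``moreover'' claim, observe that $\Mi$ is the orbit of any one of its members under the compact rotation group $O(n)$ (by Corollary~\ref{C:hard confinement} its elements are unique modulo rotation), hence $\Mi$ is compact in $(\Prob_2(\Rn),d_2)$; picking $\mi \in \Mi$ attaining $d_2(\mab,\Mi)$ then establishes the last assertion.

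Suppose the main claim fails. Then there exist $r>0$, $\alpha_k \to \infty$, minimizers $\mu_k := \mu_{\alpha_k,\beta}$ with closest $\hat\mu_k \in \Mi$, and $z_k \in \argmin V_k$ (with $V_k := \mu_k * W_{\alpha_k,\beta}$) satisfying $\dist(z_k,\spt \hat\mu_k) \ge r$. By compactness of $\Mi$, pass to a subsequence along which $\hat\mu_k \to \hat\mu_\infty \in \Mi$ in $d_2$; Corollary~\ref{C:minimizers to minimizers} then forces $d_2(\mu_k,\hat\mu_k) \to 0$, so $\mu_k \to \hat\mu_\infty$ in $d_2$ (hence narrowly, using the uniform diameter bound \eqref{diameter bound}). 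Using $R_{\alpha_k,\beta}^{\alpha_k}/\alpha_k = R_{\alpha_k,\beta}^{\beta}/\beta$ (which bounds $V_k(0)$) together with $W_{\alpha_k,\beta}(z)\to+\infty$ for $|z|\ge 2$ and $\alpha_k$ large, one sees $z_k$ is uniformly bounded; extract a further subsequence with $z_k \to z_\infty$ satisfying $\dist(z_\infty,X) \ge r$, where $X:=\spt \hat\mu_\infty$. Lemma~\ref{L:Gamma convergence} with equicoercivity from \eqref{diameter bound} and the Euler--Lagrange equation \eqref{Euler-Lagrange} then give $\min V_k = \cE_{W_{\alpha_k,\beta}}(\mu_k) \to \cE_{\Wib}(\hat\mu_\infty) =: \min V$, where $V := \hat\mu_\infty * \Wib$. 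The plan is to contradict $V_k(z_k) = \min V_k \to \min V$ by proving $\liminf_k V_k(z_k) > \min V$, splitting cases by whether $z_\infty$ lies inside $\Omega := \bigcap_{i=0}^n \overline{B_1(x_i)}$, $X=\{x_0,\ldots,x_n\}$.

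If $z_\infty \in \Omega$, the pointwise inequality $W_{\alpha,\beta}(z) \ge -|z|^\beta/\beta$ (obtained by discarding the nonnegative repulsive part) together with narrow convergence $\mu_k \to \hat\mu_\infty$ and uniform convergence of $|z_k-\cdot|^\beta \to |z_\infty-\cdot|^\beta$ on a fixed bounded set yields
\[
\liminf_{k\to\infty} V_k(z_k) \ \ge\ -\frac{1}{\beta}\int_{\Rn} |z_\infty-y|^\beta\, d\hat\mu_\infty(y) \ =\ V(z_\infty),
\]
and Lemma~\ref{L:stratified}(a) forces $V(z_\infty) > V(X) = \min V$ since $z_\infty \in \Omega\setminus X$. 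If instead $z_\infty \notin \Omega$, some vertex $x_{i_0} \in X$ satisfies $|z_\infty - x_{i_0}| > 1 + 2\delta$ for some $\delta > 0$; on a ball $B := B_{\delta/2}(x_{i_0})$ isolating that vertex, the Portmanteau theorem gives $\liminf_k \mu_k(B) \ge \hat\mu_\infty(B) = 1/(n+1) > 0$, while $|z_k - y| \ge 1 + \delta/2$ on $B$ for large $k$ forces $W_{\alpha_k,\beta}(z_k-y) \ge (1+\delta/2)^{\alpha_k}/\alpha_k - O(1) \to +\infty$. The contribution from $\Rn \setminus B$ is absorbed via the global lower bound $W_{\alpha,\beta} \ge 1/\alpha - 1/\beta$, producing $V_k(z_k) \to +\infty$ and again contradicting $V_k(z_k) \to \min V$.

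The principal technical obstacle is that $d_2$-convergence of $\mu_k \to \hat\mu_\infty$ does not force $\spt\mu_k$ to lie in a tubular neighborhood of $X$; outlier $\mu_k$-mass at bounded distance from $X$ remains possible and is coupled to a kernel $W_{\alpha_k,\beta}$ whose values can grow like $C^{\alpha_k}/\alpha_k$ for $C>1$. The elementary lower bounds $W_{\alpha,\beta}(z) \ge -|z|^\beta/\beta$ (used when $z_\infty \in \Omega$) and $W_{\alpha,\beta} \ge 1/\alpha - 1/\beta$ (used when $z_\infty \notin \Omega$) are exactly what allows narrow convergence alone to deliver the required $\liminf$ bound, short-circuiting any need for a stronger Hausdorff-type control of the supports.
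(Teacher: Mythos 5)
Your proposal is correct, but it takes a genuinely different route from the paper's. The paper argues directly and quantitatively: fixing $\mi \in \Mi$ and restricting by rotational symmetry to minimizers closest to $\mi$, it combines the near-equidistribution estimate from Corollary \ref{C:minimizers to minimizers} and the diameter bound \eqref{diameter bound} to prove three regional estimates on $\Vab$ --- an exterior estimate ($\Vab$ becomes arbitrarily large off $\Omega_\delta$), a boundary estimate comparing $\min_{\Omega_\delta\setminus A}\Vab$ with $\min_{\Omega_0\setminus A}\Vib$, and an interior estimate (uniform convergence of $\Vab$ to $\Vib$ on $\Omega_{-\delta''}$) --- and then invokes Lemma \ref{L:stratified} to force every minimum of $\Vab$ into $A\subset X_r$. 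You instead run a soft compactness--contradiction argument: along a bad sequence $\alpha_k\to\infty$ you use the Euler--Lagrange identity $\min V_k=\cE_{W_{\alpha_k,\beta}}(\mu_k)$ together with the $\Gamma$-convergence of Lemma \ref{L:Gamma convergence} to get $\min V_k\to\min(\hat\mu_\infty*\Wib)$, and contradict this with a $\liminf$ bound at the limit point $z_\infty$ of the potential minimizers, split according to whether $z_\infty$ lies in the Reuleaux body $\Omega$ (where $\Wab\ge-\Wb$ and the strict characterization in Lemma \ref{L:stratified}(a) apply) or outside it (where a fixed fraction of mass near an isolated vertex drives $V_k(z_k)\to+\infty$). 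Both proofs rest on Corollary \ref{C:minimizers to minimizers} and Lemma \ref{L:stratified}; yours is shorter and avoids the explicit exterior/boundary/interior machinery, but it is purely qualitative, whereas the paper's estimates are uniform over all minimizers at a fixed large $\al$ and are the form reused in Corollary \ref{C:concentration}. Your observation that $\Mi$ is the compact $O(n)$-orbit of a single measure, so that the distance $d_2(\mab,\Mi)$ is attained, is a cleaner justification of the ``moreover'' clause than the paper's implicit treatment. Two small steps deserve to be written out: (i) the identity $\min_{\Rn}(\hat\mu_\infty*\Wib)=\cE_{\Wib}(\hat\mu_\infty)$, which follows from Lemma \ref{L:stratified}(a) and a one-line evaluation at a vertex; and (ii) the fact that $d_2$-convergence $\hat\mu_k\to\hat\mu_\infty$ of uniform measures on $(n+1)$-point sets implies Hausdorff convergence of their supports, which is what allows you to pass from $\dist(z_k,\spt\hat\mu_k)\ge r$ to $\dist(z_\infty,X)\ge r$.
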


\noin
{\bf Proof:}
Fix $\mi \in \Mi$ and define $X:=\spt \mi = \{x_0,\ldots,x_n\}$ and its (open) $r$-neighborhood
\[
X_r :=  \bigcup_{i=1}^n B_{r}(x_i).
\]
Given $\delta \in \R$, define
\[
\Omega_\delta := \bigcap_{i=1}^n \overline{B_{1+\delta}(x_i)}.
\]
Note $\Omega_0$ is a strict convexification of $\Delta^n:=\conv(X)$ sharing the same ``vertices'',
and $\Omega_{\pm \delta}$ are slight enlargements and reductions thereof.

By the rotational symmetry of the problem,  it suffices to restrict our attention {to} those minimizers 
$\mab \in \argmin_{\Prob_0(\Rn)} \cE_{\Wab}$ for which $d_2(\mab,\Mi) = d_2(\mab,\mi)$. 
The narrow convergence shown in Corollary~\ref{C:minimizers to minimizers} implies that given $\epsilon>0$,
taking $\al$ large enough ensures that all such minimizers satisfy
\be\label{near equi}
|\mab(B_r(x_i))- \frac1{n+1}|<\frac{\epsilon}{n+1}
\end{equation}
for each $x_i \in X$.
Notice $\Omega_0$ is precisely the set where $\Vib := \hat \mu * \Wib$ is finite,  and the latter is strictly concave on $\Omega_0$,
being a sum of $n+1$ translates of $-\Wb$.
The proof of the proposition requires estimates for the convergence of $\Vab = \mab * \Wab$ to $\Vib$ in three different regions:

\vspace{2mm}
{\em Exterior estimate: Given $\delta>0$ and $R<\infty$, taking $\al$ large enough ensures $\Vab > R$ on $\Rn \setminus \Omega_\delta$.}

\vspace{2mm}

Proof of exterior estimate: For each $y \in \R^n \setminus \Omega_{\delta}$ there is $x \in X$ such that $|x-y| \ge 1+\delta$. 
Note that $\wab(r) := r^\alpha/\alpha - r^\beta/\beta$  converges uniformly  to infinity on $[1+\delta /2, \infty)$ as $\al \to \infty$. 
Now given $\ep < \delta/2$, taking $\al$ sufficiently large ensures that $\mab(B_{\ep}(x)) \approx \frac{1}{n+1}$ within the error $\ep$.  
This implies, taking $\al$ larger if necessary, 
\[ \int_{B_\ep(x)} \Wab (y-z)\, d\mu(z) > 2R  \q \text{for all $y$ with } |y-x| \ge 1+\delta.
\]
On the other hand, since $\Wab \ge -1/2$ we have $\nu * \Wab \ge -1/2$ on $\R^n$ for any nonnegative measure $\nu$ with $\nu(\R^n) \le 1$. Hence we get
\[ \int_{\R^n} \Wab (y-z) \, d\mu(z) > 2R-1/2  \q \text{for all $y$ with } |y-x| \ge 1+\delta.
\]
Since this estimate holds for each $x \in X$, the exterior estimate is established.

\vspace{2mm}

{\em Boundary estimate: Define $A:= \cup_{i=0}^n A_i$, where $A_i$ is the compact neighbourhood of $x_i \in X$ given by
the intersection of spherical annuli
\[
A_i = A_i(\delta,\delta') := \bigcap_{j \ne i} \overline {B_{1+\delta}(x_j)} \setminus B_{1-\delta'}(x_j). 
\]
For small $\delta,\delta'>0$,  we claim taking $\al$ sufficiently large ensures
\be\label{boundary estimate}
\min_{\Omega_\delta \setminus A} \Vab \ge -2 \delta + \min_{\Omega_0 \setminus A} \Vib.
\end{equation}
}
Proof of \eqref{boundary estimate}:
Decompose $\mab = \mu_r + \tilde \mu_r$ into its restriction $\mu_r$ to $X_r$ and its complement.
Taking $\al$ sufficiently large ensures $\tilde \mu_r[\Rn] \le \delta$ according to \eqref{near equi}.
Decompose $\Wab = \Wab^\le + \Wab^>$ into its short range and long range parts as in \eqref{srlr}, noticing $\Wab^> \ge 0$.
 Let $x \in \Omega_{\delta} \setminus A$. Observe that for sufficiently large $\al$,
\begin{align}
\Vab(x)& = (\Wab * \mab) (x)  \nn
\\ &\ge (\Wib^\le * \mab) (x)  \nn
\\ &=(\Wib^\le * \mu_r) (x) + (\Wib* \tilde \mu_r) (x) \nn \\
&\ge (\Wib^\le * \mu_r) (x) -\delta /2    \q \text{since \ $\Wib^\le \ge -1/2$ and $\tilde\mu_r(\R^n) \le \delta$,}  \nn \\
& \ge (\Wib^\le * \hat \mu) (x) -\delta  \q \text{for\ {$\al$ large enough} by Corollary \ref{C:minimizers to minimizers},} \nn \\
& \ge (\Wib^\le * \hat \mu) (y) -2\delta  \q \text{for some} \ y \in \Omega_0 \setminus A, \nn
\end{align}
since  $\Wib^\le$ is 1-Lipschitz, and for each $x \in \Omega_{\delta} \setminus A$, there exists $y \in \Omega_0 \setminus A$ such that $|x-y| \le \delta$ 
{(see figure \ref{bdest})}. 
Taking the infimum over $y\in \Omega_0 \setminus A$ and then over $x \in \Omega_\delta\setminus A$ yields the desired inequality \eqref{boundary estimate}.

\vspace{2mm}
 \begin{figure}
  \includegraphics[width=0.5\linewidth]{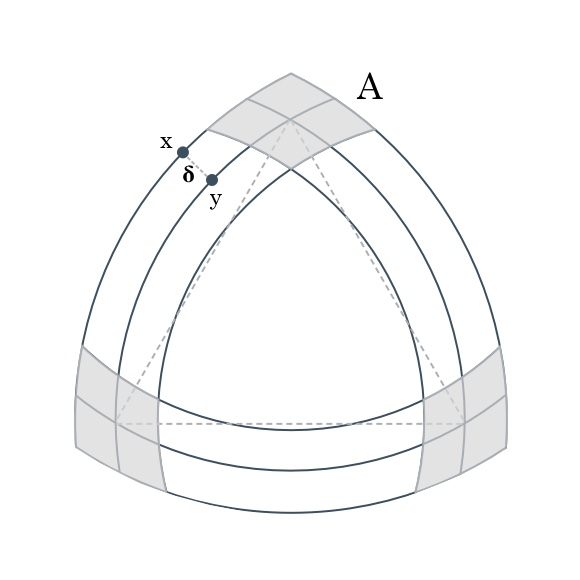}
  \caption{Relevant regions for the estimates.}
  \label{bdest}
\end{figure}
{\em Interior estimate: $\Vab$ converges uniformly to $\Vib$ on $\Omega_{-\delta''}$ for each $\delta''>0$.}

\vspace{2mm}

Proof of interior estimate: Take $\delta>0$ small (e.g. $\delta < \delta'' / 8$), and recall that for sufficiently large $\al$ we have 
$\mab(B_{\delta/2}(x)) \approx \frac{1}{n+1}$ for every $x \in X$ by \eqref{near equi}.
The diameter bound \eqref{diameter bound} then implies, taking $\al$ larger if necessary, that 
\[ \spt(\mu_{\al, \bt}) \subset \Omega_{\delta}.
\]
Note that for every  $x \in \Omega_{-\delta''}$ and $y \in \Omega_{\delta}$, we have $|x-y| \le 1-\delta''/2$. Recall $\wab \to \wib$ uniformly on $[0,1]$ as $\al \to \infty$. 
These facts, plus the narrow convergence of $\mab$ to $\hat \mu$ from Corollary \ref{C:minimizers to minimizers}, imply
\begin{align*}
&\max_{x \in \Omega_{-\delta''}}  | \Vab(x) - \Vib(x) | 
\\ &=\max_{x \in \Omega_{-\delta''}} |(\Wab * \mab)(x) - (\Wib * \hat \mu)(x) |  \nn 
\\ &\le  \max_{x \in \Omega_{-\delta''}} | ((\Wab - \Wib)* \mab)(x)| +  \max_{x \in \Omega_{-\delta''}} |(\Wib * (\mab-\hat \mu))(x) |  
\\ &< \ep \nn
\end{align*}
for $\al$ sufficiently large, given $\ep >0$. This proves the interior estimate.

\vspace{2mm}

{\em Now we prove the proposition.}  Given $r>0$, take $\delta,\delta'>0$ sufficiently small that $A = A(\delta,\delta') \subset X_r$.
Recall that the limiting potential $\Vib$ is continuous and strictly concave on $\Omega_0$,  $+\infty$ outside, 
and attains
its minimum value $\omega=\Vib(x_0)$ precisely on $X$ by Lemma \ref{L:stratified}.
  Notice $f(\delta') = \min_{\Omega_0 \setminus A} \Vib$ is independent of $\delta>0$
and increases continuously with $\delta' \ge 0$ from $f(0)=\omega$. Take $\delta$ smaller if necessary so that 
$2\delta < f(\delta')-\omega$.  For $\al$ sufficiently large the boundary estimate yields
\be\label{boundary consequence}
\min_{\Omega_\delta \setminus A} \Vab > \omega.
\end{equation}
The interior estimate guarantees that by taking $\delta''$ sufficiently small and $\al$ sufficiently large,
we can make $\min_{\Omega_{-\delta''}} \Vab$ as close to $\omega$ as we please ---
 less than \eqref{boundary consequence} in particular.  Taking $\al$ larger if necessary ensures the 
values of $\Vab$ outside $\Omega_\delta$ are all larger than \eqref{boundary consequence}.
In this case the minimum of $\Vab$ can only be attained in $A \subset X_r$.  \QED

\begin{corollary}[Optimizers vanish outside some neighbourhood of a unit simplex]
\label{C:concentration}
Fix $\mi \in \Mi$, $\bt \ge 2$ and $r,\ep \in (0,1/2)$. 
If $\al$ is sufficiently large and $\mu \in \argmin_{\Prob_0(\Rn)} \cE_{\Wab}$ with $d_2(\mu,\mi) = d_2(\mu, \Mi)$
then $\sum_i \mu(B_r(x_i))=1$ and $|\mu(B_r(x_i)) - \frac{1}{n+1}| < \frac{\ep}{n+1}$ for each $x_i \in \spt \mi := \{x_0,\ldots, x_{n+1}\}$.
\end{corollary}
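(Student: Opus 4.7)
The plan is to combine the Euler--Lagrange condition \eqref{Euler-Lagrange} with the potential estimate of Proposition \ref{P:convergence of potentials} to trap $\spt \mu$ inside $X_r := \bigcup_i B_r(x_i)$, and then read off the mass balance directly from the narrow convergence supplied by Corollary \ref{C:minimizers to minimizers}.

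First I would observe that since $\alpha \ge \beta \ge 2$ the potential $\Wab$ is continuous on $\R^n$, and since $\mu$ has compact support by the diameter bound \eqref{diameter bound}, the convolution $\Vab := \mu * \Wab$ is continuous. The Euler--Lagrange relation \eqref{Euler-Lagrange} asserts $\Vab(x) \ge \cE_{\Wab}(\mu)$ for every $x \in \R^n$, with equality holding $\mu$-a.e. By continuity the equality set is closed, hence contains $\spt \mu$, so $\spt \mu \subset \argmin \Vab$. Proposition \ref{P:convergence of potentials}, applied with the given $r$, then yields $\argmin \Vab \subset X_r$ once $\alpha$ is large enough. Since $r<1/2$ and the vertices of a unit simplex are pairwise at unit distance, the balls $\{B_r(x_i)\}_{i=0}^n$ are pairwise disjoint, so the first claim $\sum_i \mu(B_r(x_i)) = \mu(X_r) = 1$ follows.

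For the mass balance, the hypothesis $d_2(\mu,\hat\mu) = d_2(\mu,\hat M_0)$ together with Corollary \ref{C:minimizers to minimizers} makes $d_2(\mu,\hat\mu)$ arbitrarily small when $\alpha$ is large. Since $d_2$-convergence implies narrow convergence on $\Prob_2(\R^n)$ and each $B_r(x_i)$ is open, the Portmanteau theorem yields $\liminf \mu(B_r(x_i)) \ge \hat\mu(B_r(x_i)) = \tfrac{1}{n+1}$ for each $i$; combined with $\sum_i \mu(B_r(x_i)) = 1 = \sum_i \tfrac{1}{n+1}$, this forces $\mu(B_r(x_i)) \to \tfrac{1}{n+1}$ uniformly in $i$, so $|\mu(B_r(x_i)) - \tfrac{1}{n+1}| < \tfrac{\epsilon}{n+1}$ holds for $\alpha$ large enough. (This last step was effectively recorded as \eqref{near equi} in the proof of Proposition \ref{P:convergence of potentials} and could simply be invoked.)

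The only slightly subtle point is the upgrade from ``equality $\mu$-a.e.'' to ``equality on $\spt \mu$'', which is what justifies inserting Proposition \ref{P:convergence of potentials} to conclude $\spt \mu \subset X_r$ rather than merely $\mu(X_r) = 1$. Continuity of $\Vab$ dispenses with this, and once that is in hand no further obstacle arises: everything else is a direct consequence of results already established in this section.
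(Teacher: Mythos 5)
Your proposal is correct and follows essentially the same route as the paper: Euler--Lagrange \eqref{Euler-Lagrange} plus Proposition \ref{P:convergence of potentials} to force the mass into $X_r$ (disjoint balls since $r<1/2$), and the estimate \eqref{near equi} for the near-equidistribution. Your ``subtle point'' about upgrading $\mu$-a.e.\ equality to equality on $\spt\mu$ is harmless but unnecessary, since $\mu$-a.e.\ equality already gives $\mu(\argmin V)=1$ and hence $\mu(X_r)=1$, which is all the stated conclusion requires.
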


\noin
{\bf Proof:}
The estimate \eqref{near equi} was verified in the course of proving Proposition \ref{P:convergence of potentials},
which also asserts that the potential $V := \mu * \Wab$ is not minimized outside of $X_r := \cup_{x \in \spt \mi} B_r(x)$.
But the Euler-Lagrange equation \eqref{Euler-Lagrange} established  {by} the quartet and trio shows that $\mu$ vanishes outside
$\argmin_\Rn V \subset X_r$. Since $r<1/2$ implies that $X_r$ is a union of $n+1$ disjoint balls,  we conclude 
$\sum_i \mu(B_r(x_i))=1$ as desired.
\QED

\section{Identifying local and global energy minimizers}
\label{S:local}

This section is devoted to the proof of our two main results, Theorems~\ref{T:local} and \ref{T:global},
which identify $d_\infty$-local energy minimizers throughout the mildly repulsive triangle $\al>\bt \ge 2$
and characterize the global energy minimizers for large $\al$ in this range.  The key to both results is the following 
localization theorem based on second variation, 
which allows us to improve on the conclusion of Corollary~\ref{C:concentration}. 
Its proof consists of a comparison 
showing that
if the support of measure $\mu$ lies in a sufficiently small (say $r>0$) neighbourhood of the vertices $X$ of a unit $n$-simplex, 
then for each $x \in X$, the energy of $\mu$ can be reduced by concentrating all of its mass in $B_r(x)$ 
{at the center of mass of the restriction of $\mu$ to this ball.} This is done by establishing a uniformly convex lower bound for the potential $\mu * W$ 
at its minimum in $B_r(x)$, which allows us to estimate the local variance to be zero for any local energy minimizer $\mu$, 
hence all of its mass there to concentrate at a single point. 
A byproduct of this same argument shows the points form a top-dimensional unit simplex.
Thus there are $d_\infty$-local energy minimizers $\mu$ concentrating all of their mass on the vertices of a 
unit simplex (and the mass is nearly equidistributed in the case of a global energy minimizer).  
For the latter case, a comparison with facts we have already proved then allows us to 
remove the adjective `nearly'.

\begin{theorem}[Energetic localization of mass to a unit simplex]\label{T:concentration}
{Fix $m_n \ge \ldots m_1  \ge m_0  >0$ with $\sum_{i=0}^n m_n=1$, $\bt^*>\bt \ge 2$, and the set $X =  \{x_0,x_1,\ldots,x_n\}\subset \R^n$ of vertices of a unit $n$-simplex. If  $0 < \rho \le {m_0m_1}/{m_n^2}$ and
\be\label{astar}
\al^* := \left\{
\begin{array}{ll}
\bt^* + 2(\bt^*-\bt)  & \text{if}\ \bt>2,
\\  \bt^*+  2(\bt^*-\bt) + \rho^{-1} \min\{n,2\}
 & \text{if}\ \bt=2,
\end{array}
\right.
\end{equation}
then there exists
$r=r(\bt^*,\bt,\rho,n)>0$} so that the following holds:
if $\al > \al^*$ and $\mu,\mi \in \Prob(\Rn)$ with $d_\infty(\mu,\mi) \le r$ and 
$\cE_{\Wab}(\mu) \le \cE_{\Wab}(\mi)$, and if $\mi$ vanishes outside $X$ but
$m_i=\mi[\{x_i\}]>0$ for each $i =0,1,\ldots,n$, then 
$\mu$ is a rotated 
translate of $\mi$.
\end{theorem}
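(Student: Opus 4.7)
\noindent\emph{Proof plan.} Since $d_\infty(\mu,\hat\mu)\le r$ with $r<1/2$ to be chosen small, an optimal $d_\infty$-coupling decomposes $\mu=\sum_{i=0}^n\mu_i$ with $\spt\mu_i\subset\overline{B_r(x_i)}$ and $\mu_i(\R^n)=m_i$. I write $\mu_i$ as the push-forward by translation through $x_i$ of $m_i\nu_i$ with $\nu_i\in\Prob(\overline{B_r(0)})$, and introduce the barycenters $\bar u_i:=\int u\,d\nu_i(u)$ and covariance matrices $\Sigma_i:=\int(u-\bar u_i)(u-\bar u_i)^T d\nu_i(u)$. Translation invariance of $\cE_{\Wab}$ lets me normalize $\sum_i m_i\bar u_i=0$.

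The plan is to Taylor-expand $\cE_{\Wab}(\mu)-\cE_{\Wab}(\hat\mu)$ around $(\bar u_i,\Sigma_i)=0$. The crucial analytic inputs are that $\Wab$ attains its minimum on the unit sphere $|z|=1$, so $\nabla\Wab$ vanishes there while $\nabla^2\Wab$ is rank-one with radial eigenvalue $\al-\bt$; and the sharp near-origin bound $-\Wab(z)\le|z|^\bt/\bt$ for $|z|\le 1$. Substituting $x=x_i+u$ and $y=x_j+v$ in the cross-terms $i\ne j$ and expanding through second order in $(u,v)$ (the first-order term vanishes), together with bounding the self-interactions $i=j$ by the near-origin estimate, yields
\begin{align*}
\cE_{\Wab}(\mu)-\cE_{\Wab}(\hat\mu)\;=\;Q_1(\bar u)+Q_2(\Sigma)+R,
\end{align*}
where
\begin{align*}
Q_1(\bar u)&:=(\al-\bt)\!\!\sum_{0\le i<j\le n}\!\! m_im_j\bigl((x_i-x_j)\cdot(\bar u_i-\bar u_j)\bigr)^2,\\
Q_2(\Sigma)&:=\sum_{i=0}^n m_i\,\Tr\!\Bigl[\Sigma_i\bigl((\al-\bt)M_i-c_iI\bigr)\Bigr],\q M_i:=\sum_{j\ne i}m_j(x_i-x_j)(x_i-x_j)^T,
\end{align*}
with $c_i=m_i$ if $\bt=2$ and $c_i=0$ if $\bt>2$, and the remainder $R$ of strictly higher order in $r$.

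The positivity analysis splits into two pieces. For $Q_1$, the unit simplex is an infinitesimally rigid framework, so $Q_1(\bar u)=0$ forces $(\bar u_i)$ to be an infinitesimal rigid motion of $(x_i)$; the normalization $\sum m_i\bar u_i=0$ removes translations, leaving only infinitesimal rotations in the kernel, and $Q_1$ is coercive on that quotient. For $Q_2$, I exploit the centroid identity $M_i=x_ix_i^T+\Sigma$, where $\Sigma:=\sum_j m_jx_jx_j^T$ is the centered covariance of $\hat\mu$, together with the ambient representation $\Sigma=\tfrac12(\mathrm{diag}(m)-mm^T)$ acting on $\mathbf{1}^\perp\subset\R^{n+1}$, to derive the sharp lower bound $\lambda_{\min}(M_i)\ge\rho/\min\{n,2\}$; combined with $c_i\le m_n$, this gives $(\al-\bt)M_i-c_iI\succeq\delta I$ for some $\delta>0$ precisely when $\al>\al^*$.

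For $r$ sufficiently small, the remainder $R$ is absorbed by $Q_1+Q_2$; the auxiliary exponent $\bt^*>\bt$ enters through moment-interpolation estimates like $\int|u|^{\bt^*}d\nu_i\le(2r)^{\bt^*-\bt}\int|u|^\bt d\nu_i$ that buy the needed slack in the higher-order Taylor remainders of $\Wab$, accounting for the correction $2(\bt^*-\bt)$ in the formula for $\al^*$. The hypothesis $\cE_{\Wab}(\mu)\le\cE_{\Wab}(\hat\mu)$ then forces $Q_1=Q_2=0$, so each $\Sigma_i=0$ (meaning $\mu$ is atomic on $n+1$ points $y_i:=x_i+\bar u_i$) and $(\bar u_i)$ is an infinitesimal rigid motion; a final finite-dimensional argument --- the interaction energy of $n+1$ fixed point-masses $\sum_{i<j}m_im_j\Wab(y_i-y_j)$ is minimized iff every pairwise distance equals $1$, whence $\{y_i\}$ is a unit regular simplex --- identifies $\mu$ as a rotated translate of $\hat\mu$. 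The main obstacle I foresee is the sharp eigenvalue bound $\lambda_{\min}(M_i)\ge\rho/\min\{n,2\}$, which must be tuned to the mass-ratio parameter $\rho=m_0m_1/m_n^2$ with precisely the numerical constant needed to recover the threshold $\al^*$; a secondary delicacy is the borderline $\bt=2$ case, in which the self-interaction contribution $c_iI=m_iI$ is of the same order as the cross-term convexity $(\al-\bt)M_i$ and so must be strictly dominated by eigenvalue strength rather than sent to zero with $r$.
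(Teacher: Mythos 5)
Your overall strategy (cluster decomposition via $d_\infty$-closeness, a second-variation expansion around the vertex configuration, an eigenvalue bound for the matrix built from the simplex edge directions, and a separate treatment of the borderline $\bt=2$ via the mass ratio $\rho$) is the same as the paper's, but two of your key steps do not hold as stated. First, the ``sharp lower bound'' $\lambda_{\min}(M_i)\ge\rho/\min\{n,2\}$ is false: for equal masses $m_j=\frac1{n+1}$ one has $\rho=1$ and $M_i=\frac1{n+1}\sum_{j\ne i}\hat e_{ij}\otimes\hat e_{ij}$, whose smallest eigenvalue is $\frac1{2(n+1)}$, far below $\frac12$. (Moreover, combining your bound with $c_i\le m_n$ would give the threshold $\bt+m_n\rho^{-1}\min\{n,2\}$, which does not match the bookkeeping leading to $\al^*$.) What you actually need is $(\al-\bt)\lambda_{\min}(M_i)>m_i$, and this does follow from $\al>\al^*$ via the cruder estimate $M_i\ge(\min_{j\ne i}m_j)\,A_i$ together with the spectral computation the paper makes for $A_i=\sum_{j\ne i}\hat e_{ij}\otimes\hat e_{ij}$ (eigenvalues $\frac12$ with multiplicity $n-1$ and $\frac{n+1}2$ simple, hence $A_i\ge\Id/\min\{n,2\}$), using $m_i\le m_n$, $\min_{j\ne i}m_j\ge m_0$ and $m_1\le m_n$. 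So this piece is fixable, but as written the central numerical claim you flagged as the main obstacle is wrong.

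Second, two structural gaps remain. (i) Your $Q_1$ is degenerate: after removing translations its kernel still contains all infinitesimal rotations of the vertex configuration, so the assertion that ``the remainder $R$ is absorbed by $Q_1+Q_2$'' cannot hold literally; along those directions $Q_1+Q_2$ vanishes while $R$ need not, and you would need either a quantitative rigidity estimate transverse to the rotation orbit or to avoid linearizing in the barycenters altogether. The paper's proof does the latter: it bounds each cross term below by $\eta\,(|z_i-z_j|-1)^2$ and keeps the exact quantity $(|y_i-y_j|-1)^2$ (the mixed term integrates to zero because the deviations $v_i$ have mean zero), so the lower bound for $\cE_{\Wab}(\mu)-\cE_{\Wab}(\hat\mu)$ is a sum of $\Var(\nu_i)$'s and $(|y_i-y_j|-1)^2$'s that vanishes exactly on unit-simplex configurations; then $\cE_{\Wab}(\mu)\le\cE_{\Wab}(\hat\mu)$ directly forces $\Var(\nu_i)=0$ and $|y_i-y_j|=1$, with no remainder to absorb in the symmetry directions. (ii) Uniformity of $r$ in $\al$: the theorem requires $r=r(\bt^*,\bt,\rho,n)$ independent of $\al$, but the third-order Taylor remainder of $\Wab$ at $|z|=1$ grows with $\al$ (derivatives of $w_{\al,\bt}$ scale like powers of $\al$), so expanding $\Wab$ directly with Hessian coefficient $\al-\bt$ forces $r\to0$ as $\al\to\infty$; your moment-interpolation use of $\bt^*$ does not address this. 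The paper's device is the monotonicity $\wab(s)-\wab(1)\ge w_{\al^*,\bt}(s)-w_{\al^*,\bt}(1)$ for $\al\ge\al^*$, followed by the uniform convexity $w_{\al^*,\bt}(s)-w_{\al^*,\bt}(1)\ge\eta(s-1)^2$ on a neighbourhood of $s=1$ depending only on $(\al^*,\bt)$, with $2\eta=\al^*-\bt^*$ --- this, rather than remainder interpolation, is the actual role of $\bt^*$ and of the summand $2(\bt^*-\bt)$ in \eqref{astar}. Your final finite-dimensional step (once the measure is atomic, $W\ge w(1)$ with equality only at unit distance) is fine and matches the spirit of the paper's conclusion.
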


\noin
{\bf Proof:}
First assume $\bt^*>\beta>2$ and $\al>\al^*=\bt^*+2(\bt^*-\bt)$ and $0< \rho \le m_0m_1/m_n^2$ and set $2\eta := \al^*-\bt^*$.  
For $r >0$ small enough (to be determined later, and independently of $\al$),  
let $\mu,\mi \in \Prob(\Rn)$ satisfy all the hypotheses of the theorem,  so that
\begin{equation}\label{micentration}
\mi=\sum_{i=0}^n m_i\delta_{x_i}. 
\end{equation}
Let  $\mu_i$ be the restriction of $\mu$ to $B_r({x_i})$.
For $r < 1/2$, the hypothesis $d_\infty(\mu,\mi)<r$ implies $\mu_i(\R^n)=m_i$
and $\mu=\sum_{i=0}^n \mu_i$.

Let us {abbreviate} $W=\Wab$ and $w=\wab$,
and consider the energy difference $F(\mu) :=\cE_W(\mu)-\cE_W(\mi) \le 0$
(which is non-positive by hypothesis).  With {$i,j = 0,1,\ldots, n$} we 
observe
\begin{align}\label{Fdef}
F(\mu) ={\sum_{i=0}^n  \bigg[ \int (\mu_i * W) d\mu_i + \sum_{j \neq i}\iint\big( W(x-y)-w(1) \big) d\mu_j(x) d\mu_i(y) \bigg]. }
\end{align}
Let $\nu_i:=\mu_i / m_i$ be the normalization of $\mu_i$. Since $\beta>2$, given any $\epsilon>0$
there exists $r=r(\epsilon)>0$ such that $W(x) \ge -\epsilon|x|^2$ in $B_r(0)$. Hence for every $i$,
\begin{align}\label{self energy}
\int (\mu_i *W) d\mu_i
&\ge - {m_i^2} \epsilon \iint
|x-y|^2 d\nu_i(x) d\nu_i(y) 
\\&= -2m_i^2 \epsilon {\rm \Var}(\nu_i)  \nn 
\end{align}
where ${\rm \Var}(\nu_i)$ is the variance  \eqref{variance} of $\nu_i$. 

Since $\al>\al^*$, the computation
\begin{align*}
[\wab(s)-\wab(1)] - [\wsb(s)-\wsb(1)] &= \was(s)-\was(1) 
\\&\ge 0
\end{align*}
shows $\wab(x)-\wab(1)$ to be a non-decreasing function of $\al$.
Noting $\wsb''(1) = \al^*-\beta > 2\eta>0$,  taking $s_0>0$ small enough 
(depending on $\al^*$ and $\beta$ but not $\alpha$) 
yields
\begin{align*}
\wab(s) - \wab(1) 
&\ge \wsb(s)-\wsb(1)
\\&\ge 
\eta(s-1)^2 {\q \text{on} \q [1-s_0,1+s_0].}
\end{align*} 
Now define $\zeta(z):=  (|z|-1)^2$. 
Since $|x_i-x_j|=1$, for $r$ small enough that $z_i \in B_r(x_i)$ and $z_j\in B_r(x_j)$ implies $| |z_i-z_j| - 1| \le s_0$, we have 
\begin{align}\label{exchange energy}
&\sum_{i=0}^n \sum_{j \ne i} \iint \big( W(z_i-z_j)-w(1) \big) d\mu_j(z_j) d \mu_i(z_i)
\\&  \ge 
\eta m_0 m_1 \sum_{i=0}^n \sum_{j \ne i} \iint \zeta(z_i-z_j)  d\nu_j(z_j) d\nu_i(z_i). \nn
\end{align}

To estimate the integrand, let $y_i := \bar x(\nu_i)$ be the barycenter \eqref{barycenter} of $\nu_i$. Let
$v_i := z_i  - y_i$, $\Delta v_{ij} := v_i-v_j$,  ${\Delta \hat y_{ij}} := \frac{y_i- y_j}{|y_i -y_j|}$, etc.  Then
\begin{align*}
|\Delta z_{ij}| 
&= \sqrt{|\Delta y_{ij}|^2 + 2 \langle \Delta y_{ij}, \Delta v_{ij} \rangle  + |\Delta v_{ij}|^2}  
\\ & = |\Delta y_{ij}|  +  \langle \Delta \hat y_{ij}, \Delta v_{ij} \rangle + O(|\Delta v_{ij}|^2)
\end{align*}
whence {$|v_i| \le 2r$, $| \Delta v_{ij} | \le 4r$, and $| |\Delta y_{ij}| -1 | \le 2r$ imply
 \begin{align*}
    \zeta(\Delta z_{ij}) =&  
(|\Delta y_{ij}| - 1)^2 + 2(|\Delta y_{ij}|-1) \langle \Delta \hat y_{ij}, \Delta v_{ij}\rangle + \langle \Delta \hat y_{ij}, \Delta v_{ij} \rangle^2
\\ & {+ O(r|\Delta v_{ij}|^2  )} 
  \end{align*}
}and
\begin{align}\label{binomial error}
&\iint \zeta(\Delta z_{ij}) d\nu_i(z_i) d\nu_j(z_j) 
 = 
(|\Delta y_{ij}| - 1)^2 
\\ & +\iint [\langle \Delta \hat y_{ij}, \Delta v_{ij} \rangle^2 + {O(r|\Delta v_{ij}|^2)}  ]  d\nu_i(z_i) d\nu_j(z_j);
\nn
\end{align}
here the error term does not depend on any parameters except through its argument.

From
$
\Delta v_{ij} = v_i -v_j
$
we compute
\begin{align}\label{quadratic1}
\iint \langle \Delta \hat y_{ij}, \Delta v_{ij} \rangle^2 d\nu_i(z_i) d\nu_j(z_j)
= 
 &\int {\langle \Delta \hat y_{ij}, v_i \rangle^2} d\nu_i(z_i)
+ \int \langle \Delta \hat y_{ij}, v_j  \rangle^2 d\nu_j(z_j)
\nn\end{align}
and 
\[
\sum_{j=1}^n   \int \langle \Delta \hat y_{0j}, v_0  \rangle^2 d\nu_0(z_0)
=  \int \langle v_0 , A_0  v_0 \rangle  d\nu_0(z_0)
\]
where the matrix $A_0$ is given by

\be\label{simplex operator}
A_0 = \sum_{j=1}^n \frac{y_0 -y_j}{|y_0 -y_j|} \otimes \frac{y_0 - y_j}{|y_0 - y_j|}. 
\end{equation}
In case $y_j = x_j$ for all $j=0,1,\ldots,n$,  a direct calculation using a scaled copy of the standard $n$-simplex 
\eqref{standard simplex} in $\R^{n+1}$
shows 
$A_0$ has $1/2$ as an eigenvalue of multiplicity $n-1$ and $\frac{n+1}2$ as a simple eigenvalue.
In this case $A_0 \ge \Id/\min\{n,2\}$ in the sense that the difference of the two matrices is non-negative definite,
where $\Id$ is the $n \times n$ identity matrix.  More generally, $|y_j-x_j| \le r$ for all $j$, from which it follows that
$A_0 \ge \frac{1 + O(r)}{\min\{n,2\}}\Id$.  Thus
\begin{equation}\label{0variance bound}
\sum_{j=1}^n   \int \langle \Delta \hat y_{0j}, v_0 \rangle^2 d\nu_0(z_0)
\ge \frac{\Var(\nu_0)} {\min\{2,n\}} (1+O(r))
\end{equation}
and
\begin{align*}
\sum_{i=0}^n \sum_{j \ne i} \iint \langle \Delta \hat y_{ij}, \Delta v_{ij} \rangle^2 d\nu_i(z_i) d\nu_j(z_j)
&\ge  \frac{2+O(r)}{\min\{n,2\}}  \sum_{i=0}^n 
\Var(\nu_i). 
\end{align*}
Noting also
\begin{align*}
\iint |\Delta v_{ij}|^2 d\nu_i d\nu_j = \Var(\nu_i)  + \Var(\nu_j),
\end{align*}
from \eqref{binomial error} we deduce
\begin{align*}
&\sum_{i=0}^n \sum_{j\ne i}  \iint \zeta({\Delta z_{ij}})  d\nu_i(z_i) d\nu_j(z_j)
\\&\ge  \sum_{i=0}^n   \left[
\frac{2+O(r)}{\min\{n,2\}} \Var(\nu_i) + \sum_{j\ne i} (|\Delta y_{ij}|-1)^2
\right]. 
\end{align*}
Recalling \eqref{exchange energy}, choose
$\la < \frac2 {\min\{n,2\}}$ and take $r>0$ smaller if necessary (depending on $\la$)
to obtain
\begin{align*}
& {\sum_{i=0}^n \sum_{j \ne i} \iint  (W(\Delta z_{ij})  - w(1))d\mu_i(z_i) d\mu_j(z_j)}
\\&\ge  \eta m_0m_1 \sum_{i=0}^n \left[  \la \Var(\nu_i) + \sum_{j \ne i}  (| \Delta y_{ij}|-1)^2 \right], 
\end{align*}
where the new constant absorbs the $O(r)$ term. With \eqref{Fdef}--\eqref{self energy} this gives
\be\label{variance coef}
F(\mu)
\ge  \eta m_0m_1 \sum_{i=0}^n \left[ (\la -  \frac{2 \epsilon }{\eta \rho} ) {\rm \Var}(\nu_i)  + \sum_{j \ne i} ( |\Delta y_{ij}|-1)^2 \right].
\end{equation}
For $0< 2\epsilon < \eta \la \rho$, choosing $r$ small enough (depending on $(\al^*,\bt,\rho)$ and our choice of $(\la,\epsilon)$), 
validates the above arguments, forcing the coefficient of $\Var(\nu_i)$ in the summation above 
to be positive for all $i$.
Now since $F(\mu) \le 0$ by assumption, this leads to the conclusion
${\rm \Var}(\nu_i)=0$ and $|y_i - y_j|=1$ for each distinct {$i,j \in \{0,1,\ldots, n\}$.} Thus 
\begin{equation} 
\mu=\sum_{i=0}^n m_i\delta_{y_i} 
\end{equation}
and the barycenters $y_i$ form a unit $n$-simplex.
We infer $\mu$ is obtained from $\mi$ by a slight translation and/or rotation, in view of Remark \ref{R:standard simplex}.  
This concludes the case $\bt>2$.

Now suppose $\bt=2$.  In this case, no matter how small $r>|x|>0$ is,  we will not have $W(x) \ge - \epsilon |x|^2$ unless
$\epsilon \ge 1/2$.  However, 
we may  take $\epsilon = 1/2$ in the preceding argument.
To compensate,  we will need ${\eta \rho} 
> \frac1\la >\frac{\min\{n,2\}}{2}$,
or equivalently
\begin{align}\label{yyy}
2\eta = \al^*-\bt^* > \frac{\min\{n,2\}}{ \rho}.
\end{align}
This follows from our choice \eqref{astar} of $\al^*$.
Now the foregoing argument implies the same conclusion.
\QED

As a first application, we show that all measures on the vertices of a unit $n$-simplex are strict $d_\infty$-local energy
minimizers in the following sense. 

\begin{definition}[Strict $d_\infty$-local energy minimizer]
Given $\al>\bt \ge 2$,  a measure $\hat \mu \in \Prob(\Rn)$ is a strict $d_\infty$-local energy minimizer of $\cE_{\Wab}$ if there exists $r>0$
such that $d_\infty(\mu,\hat \mu) <r$ implies $\cE_{\Wab}(\mu) \ge \cE_{\Wab}(\mi)$,  and equality holds only if $\mu$ is a 
rotated translate of $\mi$.
\end{definition}
Theorem \ref{T:local} follows directly from: 
\begin{corollary}[All distributions over unit simplex vertices are $d_\infty$-local energy minimizers] \label{C:strict local minimizer}
Fix $m_n \ge \cdots \ge m_1 \ge m_0 >0$ summing to one, 
$\bt^*>\bt > 2$, $\al^*$, and $r$  
as in Theorem \ref{T:concentration}
and the set $X :=\{x_0,x_1,\ldots,x_n\} \subset \R^n$ of vertices of a unit $n$-simplex.
If  $\mi \in \Prob(\Rn)$ satisfies \eqref{micentration},
then  $\al>\al^*$ implies $\mi$ is a strict $d_\infty$-local minimizer of $\cE_{\Wab}$ on $\Prob(\Rn)$;
 i.e.~there exists $r>0$ such that rotated translates of $\hat \mu$ uniquely minimize $\cE_{\Wab}$
among measures $\mu\in \Prob(\Rn)$ satisfying $d_\infty(\mu, \hat \mu) < r$.
\end{corollary}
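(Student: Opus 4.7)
The plan is to deduce this corollary as an essentially immediate consequence of Theorem~\ref{T:concentration}. First I would verify that the measure $\mi = \sum_{i=0}^n m_i \delta_{x_i}$ satisfies the hypotheses of that theorem: it vanishes outside the vertex set $X$ of a unit $n$-simplex and assigns positive mass $m_i > 0$ to each $x_i$, with the weights ordered as required, with $\bt^*>\bt>2$, and with $\al>\al^*$.

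Given any $\mu \in \Prob(\Rn)$ with $d_\infty(\mu,\mi) < r$, suppose that $\cE_{\Wab}(\mu) \le \cE_{\Wab}(\mi)$. Then the full hypotheses of Theorem~\ref{T:concentration} are satisfied, and the theorem forces $\mu$ to be a rotated translate of $\mi$. Since $\Wab$ is radially symmetric, the energy $\cE_{\Wab}$ is invariant under rigid motions of its argument, whence $\cE_{\Wab}(\mu)=\cE_{\Wab}(\mi)$. This dichotomy simultaneously shows that the strict inequality $\cE_{\Wab}(\mu) < \cE_{\Wab}(\mi)$ is impossible on the open $d_\infty$-ball of radius $r$ around $\mi$ (which gives global minimization on this ball), and that the equality case $\cE_{\Wab}(\mu) = \cE_{\Wab}(\mi)$ can occur only when $\mu$ is a rotated translate of $\mi$ (which verifies strictness in the sense of the definition just above). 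Since $\al>\al^* > \bt^* > \bt \ge 2$ in particular ensures $\al > \bt \ge 2$, the definition is applicable.

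I do not anticipate a genuine obstacle here: the heavy lifting---the second-variation comparison, the spectral lower bound on the matrix $A_0$ of \eqref{simplex operator}, and the localization of mass to a unit simplex---has already been packaged inside Theorem~\ref{T:concentration}. The only tasks remaining for the corollary are the elementary matching of hypotheses and the observation that the pair potential $\Wab$ depending only on $|x-y|$ makes $\cE_{\Wab}$ invariant under rigid motions.
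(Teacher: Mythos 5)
Your proposal is correct and follows essentially the same route as the paper's own proof: apply Theorem \ref{T:concentration} to any competitor $\mu$ with $d_\infty(\mu,\mi)<r$ and energy not exceeding that of $\mi$, conclude $\mu$ is a rotated translate of $\mi$, and use rigid-motion invariance of $\cE_{\Wab}$ to rule out strict energy decrease and to characterize the equality case. Nothing further is needed.
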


\noin
{\bf Proof:}  First assume $X=\spt \mi$. Under the hypotheses of the corollary, if  $d_\infty(\mu,\mi)<r$ but
$\cE_{\Wab}(\mu) < \cE_{\Wab}(\mi)$,  Theorem \ref{T:concentration} asserts 
that $\mu$ is a rotated translate of $\mi$,
contradicting the invariance of $\cE_{\Wab}$ under such symmetries.  This contradiction forces 
the desired conclusion: $\cE_{\Wab}(\mu) \ge \cE_{\Wab}(\mi)$.  If $\cE_{\Wab}(\mu) = \cE_{\Wab}(\mi)$,  
Theorem \ref{T:concentration} asserts $\mu$ is a rotated 
translate of $\mi$.  
 \QED
 
Theorem \ref{T:concentration} shows that we can choose $\al$ as close to $\bt$ as we please in this corollary unless $\bt=2$, and
even when $\bt=2$ we need not choose $\al$ very large unless $m:=\min_i m_i$ is very small.
When $\bt=2$ we can reformulate the theorem as an estimate for a phase transition threshold:

\begin{corollary}[Centrifugal threshold for strict $d_\infty$-local minimizers]\label{C:locastar}
Take $\bt = 2$, $X\subset \Rn$ and $\hat \mu \in \Prob(X)$ as in Theorem \ref{T:concentration},
with the mass $m_i$ of $\hat \mu$ at each vertex $x_i$ in $X$ satisfying bounds 
$0 < m_0 \le m_1 \le \cdots \le m_n$. Then there exists a smallest 
$\al_{\Delta^n}^{loc}=\al_{\Delta^n}^{loc}(m_0,\ldots,m_n) \ge { 2}$ such that
for each $\al>\al_{\Delta^n}^{loc}$, the measure $\hat \mu$ is a strict $d_\infty$-local energy minimizer.  Moreover,
\be\label{aloc}
\al_{\Delta^n}^{loc} 
\le 2+ \frac{m_n^2 \min\{n,2\}}{m_0m_1}.
\end{equation}
\end{corollary}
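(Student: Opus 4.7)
The plan is to reduce the statement to Theorem~\ref{T:concentration} in the case $\bt = 2$, optimizing its free parameters $\bt^*$ and $\rho$ to match the bound in \eqref{aloc}. First I define
\[
T := \{\, a \ge 2 : \mi \text{ is a strict } d_\infty\text{-local energy minimizer of } \cE_{W_{\al,2}} \text{ for every } \al > a \,\}.
\]
By construction $T$ is upward closed in $[2,\infty)$, so $\al_\Delta^{loc} := \inf T$ belongs to $T$ and is its smallest element: for any $\al > \al_\Delta^{loc}$ one can pick $a \in T$ with $a < \al$, and the defining property of $T$ at $a$ then applies at $\al$. The task therefore reduces to showing $T$ contains every $\al$ strictly exceeding $2 + \frac{m_n^2 \min\{n,2\}}{m_0 m_1}$.

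Fix such an $\al$. In the $\bt = 2$ branch of \eqref{astar}, the threshold $\al^* = 3\bt^* - 4 + \rho^{-1}\min\{n,2\}$ depends continuously on $(\bt^*,\rho)$ and tends to $2 + m_n^2\min\{n,2\}/(m_0 m_1)$ as $\bt^* \searrow 2$ and $\rho \nearrow m_0 m_1/m_n^2$. Since this limit is strictly less than $\al$, I can choose $\bt^* > 2$ close enough to $2$ and $\rho \in (0, m_0 m_1/m_n^2]$ close enough to $m_0 m_1/m_n^2$ to arrange $\al^* < \al$. Theorem~\ref{T:concentration} then delivers a radius $r = r(\bt^*, 2, \rho, n) > 0$ with the property that every $\mu \in \Prob(\Rn)$ satisfying $d_\infty(\mu,\mi) \le r$ and $\cE_{\Wab}(\mu) \le \cE_{\Wab}(\mi)$ must be a rotated translate of $\mi$.

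Finally, mirroring the argument of Corollary~\ref{C:strict local minimizer}, I upgrade this to strict local minimality using rigid-motion invariance of $\cE_{\Wab}$: rotated translates of $\mi$ share its energy, so strict inequality $\cE_{\Wab}(\mu) < \cE_{\Wab}(\mi)$ on the $d_\infty$-ball of radius $r$ around $\mi$ would contradict Theorem~\ref{T:concentration}. Hence $\cE_{\Wab}(\mu) \ge \cE_{\Wab}(\mi)$ throughout this ball, with equality only for rotated translates, i.e.\ $\al \in T$. No substantive obstacle arises since Theorem~\ref{T:concentration} has already performed the analytic work; the current corollary amounts to careful parameter selection, and the main subtlety worth noting is that \eqref{aloc} bounds an infimum that need not itself be attained in $T^c$ --- the two limits $\bt^* \searrow 2$ and $\rho \nearrow m_0 m_1/m_n^2$ are reached only asymptotically.
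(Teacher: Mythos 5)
Your proposal is correct and takes essentially the same route as the paper's (very terse) proof: the existence of a smallest threshold is immediate from the open-ray structure of the defining condition, and the bound \eqref{aloc} comes from Theorem \ref{T:concentration} by choosing $\bt^*>2$ close to $2$ and $\rho$ at (or near) $m_0m_1/m_n^2$ so that $\al^*<\al$, with strictness upgraded via rigid-motion invariance exactly as in Corollary \ref{C:strict local minimizer}. The paper leaves these parameter choices and the attainment of the infimum implicit; your write-up simply makes them explicit.
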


\noin{\bf Proof:}
As in Corollaries \ref{C:main} and \ref{C:strict local minimizer}, the existence of $\al_{\Delta^n}^{loc} \in [\bt,\infty]$ is obvious;  Theorem \ref{T:concentration} 
implies the bound \eqref{aloc}.
\QED

\begin{remark}[{Sharpness}]\label{R:1d}
When $(\beta,n)=(2,1)$,  taking $m_0=m \le \frac12$ and $m_n=m_1 = 1- m$, results of 
 Kang, Kim, Lim and Seo \cite{KangKimLimSeo19p} show the bound \eqref{aloc} becomes an equality
$\al_{\Delta^1}^{loc} := 1 + \frac1m$.
{When $\beta = 2 \le n$ but $m_0=m_n$,  the linear instability found for $\al<4$ by
Sun, Uminsky and Bertozzi
\cite{SunUminskyBertozziJMP12} {strongly suggests} that equality also holds in the bound 
$\al_{\Delta^n}^{loc}(\frac1{n+1},\ldots,\frac1{n+1})\le4$.
}
\end{remark}

A last but not least application will be to derive our main result on global minimizers, Theorem \ref{T:global}, restated here for the reader's convenience:

\begin{corollary}[Optimizers equidistribute over the vertices of a unit simplex] 
Given $\bt \ge 2$, taking $\al$ sufficiently large and $\Miw$ from \eqref{Mi} ensures 
\[ \argmin_{\Prob(\R^n)} \cE_{\Wab} = \Miw.\]
\end{corollary}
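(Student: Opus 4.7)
My strategy combines the concentration result of Corollary \ref{C:concentration} with the second-variation localization in Theorem \ref{T:concentration} to pin the support of any global minimizer onto the vertices of some unit $n$-simplex; a direct two-point energy computation on such discrete measures then forces the masses to be equal, placing the minimizer in $\Miw$.

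\textbf{Setup and localization to vertices.} Let $\mu^\flat \in \argmin_{\Prob(\Rn)} \cE_{\Wab}$. By rigid-motion invariance, translating $\mu^\flat$ yields $\mu \in \argmin_{\Prob_0(\Rn)} \cE_{\Wab}$ with the same energy, and since $\Miw$ is invariant under translations, showing $\mu \in \Mi$ implies $\mu^\flat \in \Miw$. Fix $\mi \in \Mi$ with $d_2(\mu,\mi) = d_2(\mu,\Mi)$ and let $\spt \mi = X = \{x_0, \ldots, x_n\}$. For $\al$ sufficiently large, Corollary \ref{C:concentration} yields $\spt \mu \subset \bigcup_i B_r(x_i)$ with masses $m_i := \mu(B_r(x_i))$ within $\ep/(n+1)$ of $1/(n+1)$. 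Define $\tilde \mi := \sum_i m_i \delta_{x_i}$; then the coupling sending $\mu$ on $B_r(x_i)$ onto $m_i \delta_{x_i}$ shows $d_\infty(\mu, \tilde \mi) \le r$, and the minimality of $\mu$ gives $\cE_{\Wab}(\mu) \le \cE_{\Wab}(\tilde \mi)$. I now apply Theorem \ref{T:concentration} with $\tilde \mi$ as the reference measure (reindexing to sort the masses), choosing $\rho = 1/2$ and $\bt^* = \bt + 1$. Then \eqref{astar} produces a finite $\al^*$; by first fixing $r$ smaller than the $r(\bt^*,\bt,\rho,n)$ supplied by the theorem, and then taking $\al$ large enough in Corollary \ref{C:concentration} so that the sorted masses satisfy $m_0 m_1/m_n^2 > \rho$ and $\al > \al^*$, all hypotheses are met. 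The theorem then concludes that $\mu$ is a rotated translate of $\tilde \mi$; in particular $\mu = \sum_i m_i \delta_{y_i}$ for some vertex set $\{y_0, \ldots, y_n\}$ of a unit $n$-simplex.

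\textbf{Equidistribution via energy computation.} Using $\Wab(0) = 0$ (since $\al, \bt > 0$) and $|y_i - y_j| = 1$ for $i \ne j$, a direct calculation gives
\[
\cE_{\Wab}(\mu) = \sum_{i,j} m_i m_j \Wab(y_i - y_j) = \wab(1) \sum_{i \ne j} m_i m_j = \wab(1)\Bigl(1 - \sum_i m_i^2\Bigr).
\]
Since $\wab(1) = \frac{1}{\al} - \frac{1}{\bt} < 0$ because $\al > \bt$, minimizing $\cE_{\Wab}(\mu)$ over the probability simplex $\{(m_0,\ldots,m_n) : m_i \ge 0,\ \sum_i m_i = 1\}$ is equivalent to minimizing $\sum_i m_i^2$, which by strict convexity is uniquely achieved at $m_i = 1/(n+1)$ for every $i$. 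Hence the global minimality of $\mu$ forces equal masses, so $\mu \in \Mi$ and therefore $\mu^\flat \in \Miw$. The reverse inclusion $\Miw \subseteq \argmin \cE_{\Wab}$ is immediate since every element of $\Miw$ has identical energy by rigid-motion invariance.

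\textbf{Main obstacle.} The delicate part is closing up the parameter dependencies among $\al^*$, $\rho$, $r$, and $\al$ so that Theorem \ref{T:concentration} applies to the measure $\tilde \mi$ produced by Corollary \ref{C:concentration}. The crucial observation that makes this possible is that as $\al \to \infty$ the masses $m_i$ tend uniformly to $\tfrac{1}{n+1}$, so the ratio $m_0 m_1 / m_n^2$ approaches $1$ and $\rho$ may be fixed bounded away from zero (e.g.\ $\rho = 1/2$) independently of $\al$, keeping $\al^*$ a single finite constant rather than a quantity that blows up with the configuration. The $\bt = 2$ case is slightly more restrictive because the term $\rho^{-1}\min\{n,2\}$ in \eqref{astar} persists, but it remains bounded.
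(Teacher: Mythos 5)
Your argument is correct, and its skeleton is the same as the paper's: translate a global minimizer to $\Prob_0(\Rn)$, invoke Corollary \ref{C:concentration} to trap its mass in $r$-balls around the vertices of a unit simplex with nearly equal masses, build the discrete comparison measure $\tilde\mi$, and apply Theorem \ref{T:concentration} (observing, as the paper does via its choice of $\rho$, that $r$ and $\al^*$ depend on the masses only through a lower bound on $m_0m_1/m_n^2$, which stays away from $0$ as $\al\to\infty$) to conclude the minimizer is purely atomic on a unit simplex. Where you diverge is the final equidistribution step: the paper notes that on measures supported on the vertex set $X$ one has $\cE_{\Wab}(\nu)=(1-\frac{\bt}{\al})\cE_{\Wib}(\nu)$ and then quotes Corollary \ref{C:hard confinement} (hence the isodiametric variance bound), whereas you compute $\cE_{\Wab}(\mu)=\wab(1)\bigl(1-\sum_i m_i^2\bigr)$ with $\wab(1)<0$ and minimize $\sum_i m_i^2$ over the probability simplex directly. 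Your closing computation is more elementary and self-contained, at the cost of nothing; the paper's version simply reuses machinery it has already built. One small point worth making explicit in your write-up: forcing $m_i=\frac1{n+1}$ shows the minimizer equidistributes over a unit simplex, but to conclude membership in $\Miw$ as defined in \eqref{Mi} (and to get the reverse inclusion $\Miw\subset\argmin\cE_{\Wab}$, for which you need all elements of $\Miw$ to be rigid motions of one another) you still must cite Corollary \ref{C:hard confinement}, which identifies $\Miw$ with exactly these equidistributed measures; so the hard-confinement characterization is not avoided, only its role in balancing the masses is.
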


\noin
{\bf Proof:}
Fix $0<\epsilon =\beta^*-\beta<1$, let $\al^*$ and {$ r=r(\bt^*,\bt,(\frac{1-\ep}{1+\epsilon})^2,n)>0$ be as in Theorem \ref{T:concentration}}.
Taking $\alpha> \al^*$ large enough and $\mu \in \argmin_{\Prob(\Rn)} \cE_{\Wab}$,  
Corollary \ref{C:concentration} yields $\mu$ vanishing outside the neighbourhood of radius $r$ around the vertex set $X=\{x_0,\ldots,x_n\}$ 
of a unit $n$-simplex,  and with $m_i := \mu[B_r(x_i)]$ satisfying $|(n+1)m_i -1 | \le \epsilon$ for each {$i = 0,1,\ldots,n$.}
The measure $\mi\in \Prob(X)$ from \eqref{micentration} then satisfies $d_\infty(\mu,\mi) \le r$ and the choice of $\mu$ ensures 
$\cE_{\Wab}(\mu) \le \cE_{\Wab}(\mi)$.
Theorem~\ref{T:concentration} now asserts $\mu$ is a translated rotation of $\mi$,
hence $\mi$ is also a global energy minimizer.
However,  for measures $\nu \in \Prob(\Rn)$ vanishing outside $X$,
we have $\cE_{\Wab}(\nu) = (1-\frac\bt\al)\cE_{\Wib}(\nu)$.
Corollary \ref{C:hard confinement} shows the latter functional is 
minimized precisely by translations of the measures in $\Mi = \Miw \cap \Prob_0(\Rn)$.
Thus we conclude some translate of $\mi$ (and of $\mu$) lies in $\Mi$ as desired,  or equivalently that $\mu[\{x\}] = \frac1{n+1}$ for all $x \in \spt \mu$.
 \QED
 
 \appendix
 
 \section{Isodiametry, variance, and regular simplices}
  \label{S:Jung}

 Our variational characterization of the unit simplex, Theorem \ref{T:isodiametric variance bound},
 was discovered using convex analysis and duality in \cite{LimMcCann19p}.  However,
 it turns out to be closely related
 to a classical result of Jung \cite{Jung01},
for which a modern proof can be found in Danzer, Gr\"unbaum and Klee \cite{DanzerGruenbaumKlee63}:

\begin{theorem}[Jung] \label{T:Jung}
Let $K \subset \R^n$ be compact with $\diam(K)=1$. Then $K$ is contained in a closed ball of radius $r_n=\sqrt{\frac{n}{2n+2}}$. Moreover, $K$ contains the vertices of a unit $n$-simplex 
unless it lies in some smaller ball.
\end{theorem}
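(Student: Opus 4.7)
The plan is to prove Jung's theorem by analyzing the smallest closed ball enclosing $K$. Let $B_R(c)$ denote this ball, where $R$ is the infimum of enclosing radii (attained by compactness of $K$) and $c$ is its unique center. It will suffice to prove (i) $R \le r_n := \sqrt{n/(2n+2)}$, and (ii) that equality $R = r_n$ forces $K$ to contain the vertices of a unit regular $n$-simplex; the ``moreover'' clause is then the contrapositive of (ii).

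The central geometric input is the claim that $c \in \conv S$, where $S := K \cap \partial B_R(c)$ is the contact set. Were this to fail, the separating hyperplane theorem would supply a unit vector $v$ with $\langle v, x-c\rangle \ge \varepsilon > 0$ for every $x \in S$; by uniform continuity on $K$ and compactness, a small translation $c \mapsto c + \delta v$ of the center would strictly shrink the maximum distance to $K$, contradicting minimality of $R$. With $c \in \conv S$ in hand, Carath\'eodory's theorem furnishes a representation $c = \sum_{i=0}^k \lambda_i x_i$ for some $x_0, \ldots, x_k \in S$, positive weights $\lambda_i$ summing to one, and $k \le n$.

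The rest of the proof is a short algebraic identity. Expanding $0 = |\sum_i \lambda_i (x_i - c)|^2$ and using $|x_i-c|^2 = R^2$ together with the polarization $\langle x_i-c,\, x_j-c\rangle = R^2 - \tfrac12|x_i-x_j|^2$ telescopes to
\[
R^2 = \sum_{i<j} \lambda_i \lambda_j |x_i - x_j|^2.
\]
Since $|x_i - x_j| \le \diam(K) = 1$ and Cauchy--Schwarz gives $\sum_i \lambda_i^2 \ge 1/(k+1)$,
\[
R^2 \le \sum_{i<j} \lambda_i \lambda_j = \frac{1 - \sum_i \lambda_i^2}{2} \le \frac{k}{2(k+1)} \le \frac{n}{2(n+1)} = r_n^2.
\]
For the rigidity conclusion, equality $R = r_n$ forces saturation at each stage: $k = n$, all $\lambda_i = 1/(n+1)$, and $|x_i - x_j| = 1$ for every $i \ne j$. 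Hence $\{x_0, \ldots, x_n\} \subset K$ is the vertex set of a unit regular $n$-simplex, as desired.

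The primary obstacle is verifying $c \in \conv S$ quantitatively --- one must simultaneously control the contact points (for which the inner product with $v$ provides first-order shrinkage of their distance to the new center) and the strictly interior points of $K \setminus S$ (which remain strictly interior under small perturbations by uniform continuity of $x \mapsto |x-c|$ on the compact set $K$). Once that step is settled, everything reduces to the single identity above combined with elementary optimization of $\sum_i \lambda_i^2$ on the probability simplex.
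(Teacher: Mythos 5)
Your proposal is correct, but it takes a genuinely different route from the paper. The paper never reproves Jung's theorem directly: in the appendix it quotes Theorem \ref{T:Jung} as classical (pointing to Danzer--Gr\"unbaum--Klee for a modern proof) and uses it to deduce the isodiametric variance bound, Theorem \ref{T:isodiametric variance bound}; the reverse implication, which is the paper's actual proof of the statement, goes through a minimax/duality proposition showing that $\sup_{\mu \in \cP(K)} \Var(\mu)$ equals the squared radius of the smallest ball enclosing $K$, so that the diameter-constrained variance bound and its equality case immediately yield both the radius estimate $r_n$ and the rigidity clause. You instead give the classical direct argument: the center $c$ of the minimal enclosing ball lies in $\conv\bigl(K \cap \partial B_R(c)\bigr)$, Carath\'eodory gives $c = \sum_{i=0}^k \lambda_i x_i$ with $k \le n$, and the identity $R^2 = \sum_{i<j}\lambda_i\lambda_j |x_i - x_j|^2$ combined with $\sum_i \lambda_i^2 \ge 1/(k+1)$ gives $R \le r_n$, with equality forcing $k=n$, equal weights, and unit mutual distances. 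What your route buys is a self-contained, elementary proof independent of the variance characterization (and hence of the companion paper \cite{LimMcCann19p}); what the paper's route buys is precisely the equivalence between Jung's theorem and the variance bound, which is the point of its appendix. One step you should tighten: in the lemma $c \in \conv S$, uniform continuity of $x \mapsto |x-c|$ does not by itself control the points of $K \setminus S$ whose distance to $c$ is arbitrarily close to $R$, since $K \setminus S$ is not compact; the standard repair is a limiting/compactness argument showing that for small $\eta>0$ every $x \in K$ with $|x-c| \ge R-\eta$ already satisfies $\langle v, x-c\rangle \ge \varepsilon/2$, so these points also move strictly inside under the translation $c \mapsto c + \delta v$, while the remaining points stay at distance at most $R - \eta + \delta < R$. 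With that patch the argument is complete.
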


In our companion work we showed that our characterization implies Jung's 
theorem \cite{LimMcCann19p}.   In this appendix we show instead that our 
characterization follows from Jung's theorem,  so that the two results are 
in some sense equivalent.
We are grateful to an anonymous seminar participant for 
drawing our attention to Jung's work, 
and to Tomasz Tkocz  \cite{Tkocz19p} who subsequently observed independently from us that our characterization could be inferred using Jung's theorem.  
Let us begin with an elementary geometric result based on
Lemma \ref{L:stratified}, which
concerns higher dimensional generalizations $\Omega \subset \R^n$ of Reuleaux's triangle
and tetrahedron.

\begin{lemma}[On Reuleaux simplices]\label{L:Reuleaux}
If $\Delta \subset \R^n$ is the set of vertices of a unit $n$-simplex centered at $z \in \R^n$ 
and $\Omega := \cap_{x \in \Delta} \overline{B_1(x)}$, then 
$\Delta = \Omega \cap \p B_{r_n}(z)$ where $B_r(x)$ denotes the ball of radius $r$ centered at $x$.
\end{lemma}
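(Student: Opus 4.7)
The plan is to establish the two inclusions separately: the easy forward direction follows from the definitions, while the reverse direction will reduce to a one-line consequence of Lemma~\ref{L:stratified}.

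For $\Delta \subseteq \Omega \cap \partial B_{r_n}(z)$, I would just check that each vertex $x_i \in \Delta$ lies at distance $1$ from every other vertex and at distance $0$ from itself (so $x_i$ belongs to every $\overline{B_1(x_j)}$, hence to $\Omega$), and at distance exactly $r_n = \sqrt{n/(2n+2)}$ from the centroid $z$. The latter is the standard circumradius of a unit regular $n$-simplex, verifiable directly from the standard-simplex description in Remark~\ref{R:standard simplex}. This puts $x_i \in \partial B_{r_n}(z)$ as required.

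The reverse inclusion $\Omega \cap \partial B_{r_n}(z) \subseteq \Delta$ is the substantive part, and the key observation is that when $\beta = 2$ the potential $V(y) = -\sum_{i=0}^n |y - x_i|^2$ of Lemma~\ref{L:stratified} depends on $y$ only through the radial distance $|y-z|$. Indeed, since $\sum_i (x_i - z) = 0$ (as $z$ is the centroid of $\Delta$), expanding the squares yields
\[
\sum_{i=0}^n |y - x_i|^2 \;=\; (n+1)|y - z|^2 \;+\; (n+1)r_n^2,
\]
so $V$ is a strictly decreasing function of $|y-z|$ on $\Omega$. Lemma~\ref{L:stratified}(a) identifies $\argmin_\Omega V = \Delta$, which under this identification translates to the statement that $|y - z|$ is maximized over $\Omega$ precisely on the vertex set $\Delta$, with common maximal value $r_n$. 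In particular, any $y \in \Omega$ satisfying $|y - z| = r_n$ must lie in $\Delta$, completing the proof.

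There is no real obstacle here: once one recognizes that $V$ for $\beta=2$ is radially monotone about the circumcenter $z$, the whole statement is an immediate geometric reformulation of Lemma~\ref{L:stratified}(a). The only mild subtlety is noticing that $z$ really is the circumcenter and that its distance to each vertex is $r_n$, both of which are routine consequences of Remark~\ref{R:standard simplex}.
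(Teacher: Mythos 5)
Your proof is correct and follows essentially the same route as the paper: both reduce the reverse inclusion to Lemma~\ref{L:stratified}(a) via the identity $\sum_{i=0}^n |y-x_i|^2 = (n+1)\left(|y-z|^2 + r_n^2\right)$, which the paper derives from a general Hilbert-space identity applied to $y_i = \frac{1}{n+1}(x-x_i)$ and you derive by direct expansion about the centroid using $\sum_i (x_i - z)=0$. The difference is purely cosmetic.
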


\begin{proof}
Let $\Delta = \{x_0,\ldots,x_n\} \subset \R^n$ be {the vertices of} a unit $n$-simplex centered at 
$z = \frac1{n+1}\sum x_i$.
Any vectors $y_0,\ldots,y_n$ in a Hilbert space $H$ satisfy
$$
|\sum_{i=0}^n y_i|^2 + \sum_{0 \le i<j \le n}  |y_i - y_j|^2 = (n+1) \sum_{i=0}^n |y_i|^2.
$$
Given an arbitrary point $x \in \Omega := \cap_{x \in \Delta} \overline{B_1(x)}$,
 taking $y_i=\frac1{n+1}(x-x_i)$ and $H=\R^n$ the identity above yields
 $$
 |x - z|^2 + \frac{n}{2n+2} = \frac1{n+1}  \sum_{i=0}^n |x-x_i|^2.
 $$
 Estimating the right hand side with Lemma \ref{L:stratified}(a) yields
 $ |x-z|^2 \le r_n^2, $
 with equality if and only if $x \in \Delta$. Thus $\Omega \subset \overline{B_{r_n}(z)}$
 and $\Delta = \Omega \cap \p B_{r_n}(z)$ as desired.
\end{proof}

\noin{\bf Proof of Theorem \ref{T:isodiametric variance bound} using Theorem \ref{T:Jung}.} 
The representation \eqref{standard simplex} shows the {vertices of a} standard $n$-simplex of diameter $\sqrt{2}$ lies
on a unique sphere of radius $r_n \sqrt{2}$; thus the {vertices of a} unit $n$-simplex lies on a (unique) sphere of radius $r_n$.
Assume $d=1$ without loss of generality hereafter.  
Any probability measure $\mu^*$ which assigns mass 
$1/(n+1)$ to each vertex of a unit $n$-simplex therefore has the desired variance $r_n^2$.  
Conversely,  let $\mu \in \Prob(\R^n)$
have support $K=\spt \mu$ with $\diam[K] \le 1$.  Jung's theorem then asserts $K$ is enclosed
by a sphere $S=\p B_{r}(z)$ of radius $r \le r_n$ centered at some $z \in \R^n$, and that
$r<r_n$ unless $K$ contains a unit $n$-simplex.  The familiar computation
\begin{equation}\label{variance around a non-center}
 \Var(\mu) + |\bar x(\mu) - z|^2 = \int_K |x-z|^2 d\mu(x)  \le r^2 \le r_n^2
\end{equation}
shows $\Var(\mu) \le r_n^2$.  We conclude equidistribution
$\mu^*$ over the vertices of the unit $n$-simplex 
has maximal variance subject to the unit diameter constraint on its support.
Also, \eqref{variance around a non-center} shows $\Var(\mu)<r_n^2$ unless $\bar x(\mu)=z$ and $r=r_n$.
Thus $\mu$ has smaller variance than $\mu^*$ unless $K$
contains {the vertices of} a unit $n$-simplex $\Delta:= \{x_0,\ldots, x_n\} \subset K$.

We henceforth assume $\Var(\mu)=r_n^2$, so
$\Delta \subset K = \spt \mu$ and $\bar x(\mu)=z$.  
From $\Var(\mu)=r_n^2$ 
and $\spt \mu \subset \overline{B_{r_n}(z)}$ we conclude the full mass of $\mu$ lies at distance $r_n$ from its barycenter $z=\bar x(\mu)$, i.e. $K\subset S= \partial B_{r_n}(z)$.
On the other hand,  $\diam(K) \le 1$ and $\Delta \subset K$ implies 
$K \subset \Omega$ where $\Omega := \cap_{i=0}^n \overline{B_1(x_i)}$.  
Lemma \ref{L:Reuleaux}  therefore implies $K=\spt \mu \subset S \cap \Omega= \Delta$.  Now there is
a familiar bijection between the convex hull $\conv(\Delta)$ and convex combinations of its vertices,
c.f.~Remark 2.5 \cite{LimMcCann19p}.
The only convex combination of the vertices of $\Delta$ having barycenter at $z$ assigns equal
weights $1/(n+1)$ to each vertex. From $\bar x(\mu)=z=\frac1{n+1} \sum_{i=0}^n x_i$ we deduce 
$\mu = \frac1{n+1} \sum_{i=0}^n \delta_{x_i}$ as desired.  
\QED

\end{document}